%
%
%
%
%
\RequirePackage{fix-cm}
\documentclass[smallextended]{svjour3}       
\smartqed  
\usepackage{graphicx}
\usepackage{refcheck} 
\usepackage{amsmath}
\usepackage{amssymb}
\usepackage{amsfonts}
\usepackage{latexsym}
\usepackage{url}
\usepackage{color}
\usepackage{graphicx}

%

\newcommand{\E}{\operatorname{E}}
\newcommand{\F}{\operatorname{F}}
\newcommand{\G}{\operatorname{G}}
\newcommand{\Ha}{\operatorname{H}}
\newcommand{\Var}{\operatorname{Var}}
\newcommand{\Cov}{\operatorname{Cov}}
\newcommand{\Prob}{\operatorname{P}}

\newcounter{exa}

\newtheorem{cor}[theorem]{Corollary}

\newtheorem{lem}[theorem]{Lemma}
\newtheorem{ex}[exa]{Example}
\newtheorem{theo}[theorem]{Theorem}

\newtheorem{res}[theorem]{Result}
%
\begin{document}
\allowdisplaybreaks

\title{Asymptotic normality of the likelihood moment estimators for a stationary linear process with heavy-tailed innovations}


\author{Lukas Martig       \and
        J\"urg H\"usler 
}


\institute{L. Martig \at University of Bern, Inst.\ math.\ Statistics\\
              Sidlerstrasse 5, 3012 Bern, Switzerland \\
              Tel.: +41-31-6318811\\
              Fax: +41-31-6313870\\
              \email{lukasmartig@gmail.com}           
           \and
           J. H\"usler \at Inst.\ math.\ Statistics\\
           Sidlerstrasse 5, 3012 Bern, Switzerland
}

\date{Received: 25 Mai 2016 / Accepted: date}

\maketitle

\begin{abstract}
A variety of estimators for the parameters of the Generalized Pareto distribution, the approximating distribution for excesses over a high threshold, have been proposed, always assuming the underlying data to be independent. We recently proved  in \cite{Martig} that the likelihood moment estimators are consistent estimators for the parameters of the Generalized Pareto distribution for the case where the underlying data arises from a (stationary) linear process with heavy-tailed innovations. In this paper we derive the  bivariate asymptotic normality under some additional assumptions and give an explicit example on how to check these conditions by using asymptotic expansions.
\keywords{Generalized Pareto distribution \and Linear processes \and Heavy-tailed data \and Likelihood moment estimators \and Asymptotic Normality}
\subclass{60G50 \and 60G70 \and 62G20 \and 62G32}
\end{abstract}

\section{Fitting a Generalized Pareto distribution for a linear process with regularly varying tails}
\setcounter{equation}{0}
\renewcommand{\theequation}{\thesection.\arabic{equation}}
 We consider a \emph{(strictly) stationary} linear process
\begin{equation}
X_n = \sum_{j=0}^{\infty} c_j Z_{n-j} \tag{\text{A.}1} \label{A1}
\end{equation}
whose iid innovations $Z_n$ have a marginal distribution function $G_{Z}$ with regularly varying (i.e heavy) tails of index $- 1 / \gamma$, for $\gamma > 0$,  i.e.
\begin{equation}
1 - G_Z(z) \sim \pi_1 z^{-1/\gamma}L(z) \hspace{0.4cm} \text{and} \hspace{0.4cm} G_Z(-z)  \sim \pi_2 z^{-1/\gamma}L(z) \hspace{0.4cm} \text{as} \hspace{0.2cm} z \to \infty, \tag{\text{A.}2} \label{A2}
\end{equation}
for $\pi_1, \pi_2 \geq 0$, $\pi_1+ \pi_2 = 1$ with a slowly varying function $L(z)$,
where  $\{c_j\}_{j \geq 0} \in \mathbb{R}^{\infty}$. 
Clearly, (\ref{A2}) implies $1 - G_{|Z|}(z) \sim z^{-1/\gamma}L(z)$, where $G_{|Z|}$ is the (marginal) distribution function of $|Z_n|$.
\vspace{0.25cm}
\newline Furthermore, for $\{c_j\}_{j \geq 0}$ being the sequence of coefficients in (\ref{A1}), let's assume that there is at least one $c_j \neq 0$ and there exist $A > 0$ and $u > 1$ such that
\begin{equation}
|c_j| < A u^{-j}, \hspace{0.5cm} j \in \mathbb{N}_0.  \tag{\text{A.}3} \label{A3}
\end{equation}
Here, it is worth to mention that every causal ARMA($p$,$q$) process of the form
\begin{equation*}
X_n - \phi_1 X_{n-1} - \ldots - \phi_p X_{n-p} = Z_n + \theta_1 Z_{n-1} + \ldots + \theta_q Z_{n-q},
\end{equation*}
$\phi_1, \ldots, \phi_p; \theta_1, \ldots, \theta_q \in \mathbb{R}$, has a representation as in (\ref{A1}) such that (\ref{A3}) automatically holds (\cite{Brockwell}, p. 85). From now on, (\ref{A1})-(\ref{A3}) will be considered as \textbf{Assumptions 1-3}.
\vspace{0.25cm}
\newline If we denote the marginal distribution function of $|X_n|$ by $\F_{|X|}$, then, with the help of Lemma 5.2 of \cite{Datta}, we directly conclude that the Assumptions 1-3 imply
\begin{equation}
\lim_{t \to \infty} \frac{1-\F_{|X|}(t)}{1 - \G_{|Z|}(t)}= \sum_{k=0}^{\infty} |c_k|^{1/\gamma} := ||c||. \label{Result 2.0.1}
\end{equation}
Under mild restrictions on the coefficients $c_j$ the tail behavior of $\F_{|X|}$ thus coincides with that of $\G_{|Z|}$ up to the constant $|| c ||$ and consequently, the marginal distribution of the time series $|X_n|$ has also a regularly varying tail of index $- 1 / \gamma$. According to \cite{Pickands}, Theorem 7, it then follows that there exists a positive function $\sigma^*(t)$ such that for $\F_{|X|,t}(x):= \Prob\big{(}|X_n|-t \leq x \, \big{|} |X_n| > t\big{)}$:
\vspace{0.2cm}
\begin{equation}
\lim_{t \to \infty} \: \sup_{x > 0} \: \left|\F_{|X|,t}(x) - \left[1- \left(1+ \frac{\gamma}{\sigma^* (t)}x \right)^{-1/\gamma} \right] \right| = 0. \label{domain4}
\end{equation}
\vspace{0.1cm}
\newline
The function $\Ha_{\gamma, \sigma^*}(x) = 1- \left(1+ \gamma x / \sigma^* (t) \right)^{-1/\gamma}$ is known as the Generalized Pareto distribution (GPD) with scale function $\sigma^* (t)$. (In fact, if $\gamma > 0$, the GPD simply equals to the Pareto distribution.) Relation (\ref{domain4}) thus tells us, that the random variable $|X_n|-t \, \big{|}|X_n|>t$, the excess above a high threshold $t$, has approximately a Generalized Pareto distribution. In practice, $t$ is replaced by the $(k + 1)$th largest observation $|X|_{n,n-k}$ for a sufficiently large $k << n$, i.e.\ $k=k(n)=o(n)$. Consequently, the target will be to estimate $\gamma$ and $\sigma^* (t)$ using the $k$ excess-values $|X|_{n,n} - |X|_{n,n-k}, \ldots, |X|_{n,n-k+1} - |X|_{n,n-k}$.
\vspace*{0.25cm}
\newline A variety of estimators have been proposed for the two parameters $\gamma$ and $\sigma^*  = \sigma^* (t)$ such as the maximum likelihood estimators (\cite{Smith}), the moment and probability weighted moment estimators (\cite{Hosking}), the likelihood moment estimators (\cite{Zhang}) or the goodness-of-fit estimators (\cite{Huesler}), to mention only a few. A nice r\'{e}sum\'{e} is also available in \cite{deHaan}.
\vspace{0.25cm}
\newline We showed in \cite{Martig} that the likelihood moment estimators $(\hat{\gamma}_{LME}$,$\hat{\sigma}_{LME})^T$, the solutions of the following system of equations for $\gamma$ and $\sigma^*$:
\vspace{0.2cm}
\begin{eqnarray}
  \frac{1}{k} \sum_{i=0}^{k-1} \log \left(1 + \frac{\gamma}{\sigma^*} \left(|X|_{n,n-i} - |X|_{n,n-k} \right)  \right)  & = & \gamma \label{LME1} \\[0.1cm]
 \frac{1}{k} \sum_{i=0}^{k-1} \left(1 + \frac{\gamma}{\sigma^*} \left(|X|_{n,n-i} - |X|_{n,n-k} \right)  \right)^{r/\gamma} & = & (1-r)^{-1}, \label{LME2}
\end{eqnarray}
\vspace{-0.1cm} \newline
are consistent estimators for the parameters of the Generalized Pareto distribution if (\ref{A1})-(\ref{A3}) hold and $r < 0$. In this paper, we prove the asymptotic bivariate normality of $(\hat{\gamma}_{LME}$,$\hat{\sigma}_{LME})^T$ under some additional assumptions, see Theorem \ref{Theo:Asnorm} and Corollary \ref{Cor:Asnorm} in Section 2,
$$\sqrt{k} \left( \begin{array}{c} \displaystyle \hat{\gamma}_{LME} - \gamma \\[0.25cm] \displaystyle \frac{\hat{\sigma}_{LME}}{\sigma(n/k)} - 1 \end{array} \right) \overset{D}{\to} N\left( \boldsymbol{0},L \Sigma L^T\right).
$$

\section{Asymptotic normality of $(\hat{\gamma}_{LME}$,$\hat{\sigma}_{LME})^T$}
\setcounter{equation}{0}
In this section we prove asymptotic bivariate normality of $(\hat{\gamma}_{LME}$,$\hat{\sigma}_{LME})^T$. Thereby, we will use and also extend some of the results about tail array sums of \cite{Rootzen1} and \cite{Rootzen2}. For the calculation of the covariance matrix, the methods of \cite{Resnick4} are applied.
\vspace{0.25cm}
\newline For our result we need further assumptions. Throughout, the sequence $k = k(n), k =o(n)$ refers to the $(k + 1)$th largest observation $|X|_{n,n-k}$  and for simplicity, we will assume that there exist no ties and that $|X|_{n,n-k}$ is uniquely defined (notice that -- under some weak additional conditions -- all results surely continue to hold in the presence of ties, see Remark 4.4 in \cite{Rootzen1}). Also, we assume $r$ in (\ref{LME2}) to be negative and finally, for $b_{|X|}(t) := (1/(1-\F_{|X|}))^{\leftarrow}(t) := \inf\{y : 1/(1-\F_{|X|}(y)) \geq t  \}$ being the $1 - 1/t$-quantile of $\F_{|X|}$, let's define
\begin{equation}
\sigma(t) := \sigma^*(b_{|X|}(t)) \label{aissig}
\end{equation}
so that for any $x > 0$ as $t \to \infty$ (see \cite{Martig}):
\begin{equation*}
\frac{b_{|X|}(tx)-b_{|X|}(t)}{\sigma(t)} \to \frac{x^{\gamma} - 1 }{\gamma}.
\end{equation*}
\begin{definition} Let $\mathfrak{B}_{i,j}$ be the $\sigma$-field $\sigma(X_h)_{i \leq h \leq j}$ generated by a sequence $X_i, X_{i+1},$ $\ldots,X_j$.  Then for fixed $n, l \in \mathbb{N} \, \diagdown \, \{0\}$ with $l < n$ we define
      \begin{equation*}
      \alpha_{n,l} := \sup(|\Prob(A \cap B) - \Prob(A)\Prob(B)|: A \in \mathfrak{B}_{1,h}, B \in \mathfrak{B}_{h+l,n}, 1 \leq h \leq n-l),
      \end{equation*}
       and we say that $\{X_j\}$ is $(\alpha_{n,l_n},l_n)$-strongly mixing if $\alpha_{n,l_n} \to 0$ for some $l_n\to \infty$ with $ l_n=o(n)$.
\end{definition}
\begin{definition} Let $b_{|X|}(t)$ and $\sigma(t)$ be defined as in (\ref{aissig}), then we say $b_{|X|}$ is second-order extended regularly varying, 2ERV$(\rho, A)$ for short, if for $x > 0$, $\rho \leq 0$, $\gamma \neq -\rho$, there exists a positive (or negative) function $A(t)$ of constant sign near infinity, $A(t) \to 0$ as $t \to \infty$, such that
\begin{equation*}
\lim_{t \to \infty} \left. \frac{\displaystyle \frac{b_{|X|}(tx) - b_{|X|}(t)}{\sigma(t)} - \frac{x^\gamma - 1}{\gamma}}{A(t)} \right. = \frac{1}{\rho} \left(\frac {x^{\gamma + \rho} - 1}{\gamma + \rho} -  \frac{x^\gamma - 1}{\gamma } \right)
\end{equation*}
and
\begin{equation*}
\lim_{n,k,n/k \to \infty}  \sqrt{k} A(n/k) = 0.
\end{equation*}
\end{definition}
\begin{definition}
$1 - \F_{|X|}$ is said to be second-order regularly varying, 2RV$(\rho', A^*)$ for short, if for $x > 0$, $\rho' \leq 0$, $\gamma \neq -\rho'$, there exists a positive (or negative) function $A^*(t)$ of constant sign near infinity, $A^*(t) \to 0$ as $t \to \infty$, such that
\begin{equation*}
\lim_{t \to \infty} \left. \frac{\displaystyle \frac{1-\F_{|X|}(tx)}{1-\F_{|X|}(t)} - x^{-1/\gamma}}{A^*(t)} \right. = x^{-1/\gamma} \, \frac{x^{\rho'} -1}{\rho'}
\end{equation*}
and
\begin{equation*}
\lim_{n,k,n/k \to \infty} \sqrt{k} A^*(b_{|X|}(n/k)) \to 0.
\end{equation*}
\end{definition}
\setlength{\parindent}{0mm}
The following assumptions will be of prime importance throughout this section:
\vspace{0.1cm}
\newline Let $\{c_j \}_{j \geq 0}$ be the sequence of coefficients in (A.1).
Then we assume that
\begin{equation}
\sum_{j=1}^{\infty} \sum_{i=0}^{\infty} \left( |c_i| \wedge |c_{i+j}| \right)^{1/\gamma} \log \left(  \frac{ |c_i| \vee |c_{i+j}|}{|c_i| \wedge |c_{i+j}| }\right)  \mathbf{1} \{c_i c_{i + j} \neq  0\}  < \infty. \tag{\text{A.}4} \label{A4}
\end{equation}
\vspace{0.1cm}
\newline
 The sequence of thresholds $u_n$ is chosen such that $n(1-\F_{|X|}(\exp(u_n))) = k + 1$.
\newline ${}$ \hspace{10.8cm} (A.5)
\vspace{0.1cm}
\newline
$\{X_j\}$ is $(\alpha_{n,l_n},l_n)$-strongly mixing. \hspace{5.5cm} (A.6)
\vspace{0.1cm}
\newline
A sequence of integers $r_n$ with $l_n < r_n \leq n$ is chosen such that $kr_n = o(n)$ and for $m_n:= [n/r_n]$ we have $m_n(\alpha_{n,l_n} + l_n/n) \to 0$ as $n \to \infty$. \hspace{2cm} (A.7)
\vspace{0.1cm}
\newline
$\F_{|X|}$ has a derivative $\F'_{|X|}$ in $(t_0,\infty)$ for some $t_0 > 0$ and we assume that
\begin{equation*}
\lim_{t \to \infty}\frac{t \F'_{|X| }(t)}{1-\F_{|X|}(t)} = \frac{1}{\gamma} \tag{\text{A.}8}. \label{A8}
\end{equation*}
\vspace{0.1cm}
\newline
$b_{|X|} \in $ 2ERV$(\rho, A)$ and $1 - \F_{|X|} \in $ RV$(\rho', A^*)$. \hspace{4cm} (A.9)
\vspace{0.1cm}
\newline
The properties (A.4)-(A.9) will be called \textbf{Assumptions 4-9}. Notice that the sequence $r_n$ in (A.7) always exists (e.g. $r_n = \lceil \max(n \, \alpha_{n,l_n}^{1/2}, n^{3/2} \, l_n^{-1/2}) \rceil $). The sequence $m_n$ is called the standard sequence which decomposes $(0,n]$  into intervals $J_i := ((i-1)r_n, ir_n], 1 \leq i \leq m_n$ and a last interval $J_{m_n + 1} := (m_n r_n, n]$ termed the standard partition. Finally, one can show that if $0 <-\rho < \gamma$, then $b_{|X|} \in $ 2ERV$(\rho, A)$ implies $1 - \F_{|X|} \in $ RV$(\rho', A^*)$ with $\rho' = \rho/\gamma$ and $A^*(t)=A(1/(1-\F_{|X|}(t)))/(\gamma^2 (1 + \rho'))$. In case $0 < \gamma <-\rho$ the additional assumption $\lim_{t \to \infty} \gamma \, b_{|X|}(t) - \sigma(t) = 0$ is needed (\cite{Neves}, Theorem 1, and \cite{deHaan}, Remark 2.3.10).
\vspace{0.1cm}
\newline Now, since $1 - \F_{|X|}$ is regularly varying, it directly follows by (A.5) that for any $x >0$:
\begin{equation}\frac{n}{k}\, (1-\F_{|X|}(x\, \exp(u_n)) )\sim \frac{1-\F_{|X|}(x \, \exp(u_n))}{1-\F_{|X|}( \exp(u_n))} \sim x^{-1/\gamma}\label{u_nsim_b}
\end{equation}
as $n, k, n/k \to \infty$, which can be extended by an application of Lemma 5.1 and Lemma 5.2 of \cite{Datta} to
\begin{eqnarray}
 &\hspace*{-15mm} \frac{n}{k} \Prob\left( |X_1|> x \exp(u_n), |X_{j+1}|> y \exp(u_n)\right) \nonumber \\
  &\sim \frac{1}{||c|| } \sum_{k = 0}^{\infty}   \left(|c_k|^{1/\gamma} x^{-1/\gamma} \wedge |c_{j+k}|^{1/\gamma} y^{-1/\gamma}  \right),  \label{u_nsim_b2}
\end{eqnarray}
\vspace{-0.1cm}
\newline
locally uniformly in $x,y \in (0,\infty)$, where any $j \in \mathbb{N}_0$.
\vspace{0.25cm}
\newline Next consider the two functions
\begin{equation*}
\phi_1(x) := x \mathbf{1}\{x > 0 \} \hspace{0.25cm} \text{and} \hspace{0.25cm} \phi_2(x) := (1 - \exp(rx/\gamma))\mathbf{1}\{x > 0 \},
\end{equation*}
where $r < 0$. As we will see later on, there exists a clear connection between the tail array sums involved in the likelihood moment estimation (i.e. (\ref{LME1}) and (\ref{LME2})) and the ones implied by the functions $\phi_1$ and $\phi_2$ when considering the transformed sequence $\{\log(|X_j|)\}$. Furthermore, the properties in (\ref{u_nsim_b}) and (\ref{u_nsim_b2}) are  such that they allow us to use and slightly modify the results of Resnick \& St\^aric\^a in \cite{Resnick4}. After some repetition, it is evident that their convergences stated in Lemma 2.1 and Remark 2.1 as well as their methods applied in the respective proofs continue to hold with $b_{|X|}(n/k)$ replaced by our $\exp(u_n)$ and this allows us to calculate the means, the variances and the covariances of tail array block-sums based on $\phi_i(\log(|X_j|/\exp(u_n))) = \phi_i(\log(|X_j|)- u_n), i = 1,2$.
\vspace{0.25cm}
\newline
Throughout, we will use the following notations for $Y_j := \log(|X_j|/\exp(u_n))$, $i = 1,2$ and $r < 0$:
\vspace{0.25cm}
\begin{eqnarray*}
\beta'_1 &:=& \frac{\gamma}{(\gamma + 1)}, \hspace{0.25cm} \beta'_2 := -\frac{r}{(1-r + \gamma)} \hspace{0.25cm} \text{and} \hspace{0.25cm}  \tilde{\phi}_i(x) := \phi_i(x) - \beta'_{i} \mathbf{1}\{x > 0 \}, \\[0.25cm]
\left(\tau_n^{(i)}\right)^2 &:=& \frac{n}{r_n} \Var\left(\sum_{j = 1}^{r_n} \phi_i(Y_j) \right), \hspace{0.25cm} \left(\tau_n^{(I)}\right)^2 := \frac{n}{r_n} \Var\left(\sum_{j = 1}^{r_n} \mathbf{1}\{Y_j > 0\} \right), \\[0.1cm]
\tau_n^{(1,2)} &:=& \frac{n}{r_n} \Cov \left( \sum_{j = 1}^{r_n} \phi_1(Y_j), \sum_{j = 1}^{r_n} \phi_2(Y_j)  \right),
\\[0.1cm]
\tau_n^{(i,I)} &:=& \frac{n}{r_n} \Cov \left( \sum_{j = 1}^{r_n} \phi_i(Y_j), \sum_{j = 1}^{r_n} \mathbf{1}\{Y_j > 0\}  \right),
\\[0.1cm]
\left(\tilde{\tau}_n^{(i)}\right)^2 &:=& \frac{n}{r_n} \Var\left(\sum_{j = 1}^{r_n} \tilde{\phi}_i(Y_j) \right), \hspace{0.25cm} \tilde{\tau}_n^{(1,2)} := \frac{n}{r_n} \Cov \left( \sum_{j = 1}^{r_n} \tilde{\phi}_1(Y_j) , \sum_{j = 1}^{r_n} \tilde{\phi}_2(Y_j) \right).
\end{eqnarray*}
Finally, for $\{c_j \}_{j \geq 0}$ being again the sequence of coefficients in (A.1), we define
\stepcounter{equation}
\begin{align*}
\varphi_1 &:= \frac{1}{||c||}\sum_{j=1}^{\infty} \sum_{k=0}^{\infty} \left( |c_k| \wedge |c_{j+k}| \right)^{1/\gamma} ,\\[0.2cm]
\addtocounter{equation}{1}
\varphi_2 &:= \frac{1}{||c||} \sum_{j=1}^{\infty} \sum_{k=0}^{\infty} \frac{ \left( |c_k| \vee |c_{j+k}| \right)^{r/\gamma}}{\left( |c_k| \wedge |c_{j+k}| \right)^{(r-1)/\gamma}}  \mathbf{1} \{c_k, c_{j + k} \neq  0\}  , \\[0.2cm]
\varphi_3 &:= \frac{1}{||c||} \sum_{j=1}^{\infty} \sum_{k=0}^{\infty} \left( |c_k| \wedge |c_{j+k}| \right)^{1/\gamma} \log \left(  \frac{ |c_k| \vee |c_{j+k}|}{|c_k| \wedge |c_{j+k}| }\right)  \mathbf{1} \{c_k, c_{j + k} \neq  0\}.
\end{align*}
\begin{res} \label{Res:Variances} Let the Assumptions 1-5 hold. Then for a sequence of integers $r_n \to \infty$ satisfying $kr_n = o(n)$ we have as $n,k,n/k \to \infty$:
\vspace{0.25cm}
\begin{align}
&\frac{\left(\tau_n^{(1)}\right)^2}{k} \to 2\gamma \left(\gamma + 2 \gamma \varphi_1 + \varphi_3 \right), \label{2.2.9.1} \\[0.1cm]
&\frac{\left(\tau_n^{(2)}\right)^2}{k} \to \frac{-2r (-r + (1-2r)\varphi_1 - \varphi_2)}{(1-r)(1-2r)}, \label{2.2.9.3} \\[0.1cm]
&\frac{\left(\tau_n^{(I)}\right)^2}{k} \to 1 + 2\varphi_1, \label{2.2.9.I} \\[0.1cm]
&\frac{\tau_n^{(1,2)}}{k} \to  \frac{ \left(-\gamma r (2-r) + (2r^2 -4r + 1)\gamma \varphi_1 -\gamma \varphi_2
- r(1-r)\varphi_3  \right)}{(1-r)^2}, \label{2.2.9.ICov2} \\[0.1cm]
&\frac{\tau_n^{(1,I)}}{k} \to \gamma + 2\gamma\varphi_1 + \varphi_3, \label{2.2.9.ICov1} \\[0.1cm]
&\frac{\tau_n^{(2,I)}}{k} \to  \frac{\left(-r + (1-2r)\varphi_1 - \varphi_2  \right) }{(1-r)}, \label{2.2.9.5} \\[0.1cm]
&\frac{\left(\tilde{\tau}_n^{(1)}\right)^2}{k} \to  \frac{(1+2\varphi_1)\gamma^2 \left( 2 \gamma^2 + 2\gamma + 1 \right) + 2\gamma^2 (\gamma + 1) \varphi_3}{(\gamma + 1)^2} := \kappa_1(\gamma,r), \label{2.2.9.2} \\[0.1cm]
&\frac{\left(\tilde{\tau}_n^{(2)}\right)^2}{k} \to \frac{-2\gamma r(\gamma + 1)(-r + \varphi_1(1-2r)-\varphi_2) +r^2(1-r)(1+2\varphi_2)}{(1-r)(1-2r)(1-r+\gamma)^2}:=  \kappa_2 (\gamma,r), \label{2.2.9.4} \\[0.1cm]
&\frac{\tilde{\tau}_n^{(1,2)}}{k} \to - \frac{\gamma r ((2-r)(\gamma^2+\gamma + 1) -1 ) }{(1-r)^2 (1 + \gamma)(1-r+\gamma)} \nonumber \\[0.1cm] &  \hspace{1.45cm} - \frac{\gamma (
2r(2-r)(\gamma^2+ \gamma + 1) - (\gamma^2 + \gamma + 3r -r^2))}{(1-r)^2 (1 + \gamma)(1-r+\gamma)} \cdot \varphi_1 \nonumber \\[0.1cm] &  \hspace{1.45cm} - \frac{
\gamma (\gamma + r) }{(1-r)^2 (1+\gamma)} \cdot \varphi_2 - \frac{\gamma r}{(1-r)(1-r + \gamma)} \cdot \varphi_3 :=\kappa_3(\gamma,r). \label{2.2.9.6}
\end{align}
\end{res}
\begin{proof} By virtue of (\ref{u_nsim_b}) and (\ref{u_nsim_b2}), the convergences (\ref{2.2.9.1})-(\ref{2.2.9.6}) directly follow after a small repetition of the proofs of Lemma 2.1 and Remark 2.1 in \cite{Resnick4} with $b_{|X|}(n/k)$ replaced by our $\exp(u_n)$. \qed
\end{proof}
The next result shows the asymptotic behavior of \newline $S_n^{(i)}(x) := (\sum_{j = 1}^n \mathbf{1} \{\log(|X_j|) > x \} - n(1-\F_{|X|}(\exp(x))))/\tilde{\tau}_n^{(i)}, i = 1,2.$
\begin{lem} \label{Lem:Sn}  Suppose the Assumptions 1-8 hold. Additionally, let
\begin{equation}
  \frac{r_n}{\sqrt{k}} \to 0 \label{w_n3}
\end{equation}
as $n,k,n/k \to \infty$. Then for $i = 1,2$:
\begin{align}
& S_n^{(i)}(u_n) \overset{D}{\to} N(0,(1 + 2 \varphi_1)/\kappa_i(\gamma,r)), \label{2.2.10.1} \\[0.1cm]
& S_n^{(i)}(\log(|X|_{n,n-k})) - S_n^{(i)}(u_n)  \overset{P}{\to} 0, \label{2.2.10.3} \\[0.1cm]
& k(\log(|X|_{n,n-k}) - u_n)/\tilde{\tau}_n^{(i)} - \gamma  S_n^{(i)}(u_n)   \overset{P}{\to} 0. \label{2.2.10.4}
\end{align}
\end{lem}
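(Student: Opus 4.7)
The plan is to establish (\ref{2.2.10.1}) first as a blocks-type central limit theorem at the deterministic threshold $u_n$, and then transfer the result to the random threshold $\log|X|_{n,n-k}$ via a Taylor expansion of $1-\F_{|X|}$ enabled by Assumption~8 together with the defining identity of the $(k+1)$th order statistic. For (\ref{2.2.10.1}) I would apply the standard Bernstein big-block/small-block decomposition supplied by Assumption~7: cutting each $J_i$ into a large block of length $r_n-l_n$ followed by a small block of length $l_n$, the mixing rate $m_n\alpha_{n,l_n}\to 0$ lets the $m_n$ large-block sums be coupled with independent copies at negligible total-variation cost, while the small-block contribution has variance $o(k)$ (since $m_n l_n/n\to 0$) and is thus $o_P(\tilde{\tau}_n^{(i)})$. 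Each large-block sum is bounded by $r_n$ and $r_n/\sqrt{k}\to 0$ by (\ref{w_n3}), so the Lindeberg--Lyapunov condition is automatic at scale $\tilde{\tau}_n^{(i)}\asymp\sqrt{k}$. Result~\ref{Res:Variances} supplies the limiting variance through $(\tau_n^{(I)})^2/k\to 1+2\varphi_1$ and $(\tilde{\tau}_n^{(i)})^2/k\to\kappa_i(\gamma,r)$, producing the announced value $(1+2\varphi_1)/\kappa_i(\gamma,r)$.

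For (\ref{2.2.10.3}) and (\ref{2.2.10.4}), write $g(x):=1-\F_{|X|}(\exp(x))$, $V_n(x):=\sum_{j=1}^n\mathbf{1}\{\log|X_j|>x\}$, and $T_n(x):=V_n(x)-n\,g(x)$, so $S_n^{(i)}(x)=T_n(x)/\tilde{\tau}_n^{(i)}$. Assumption~8 forces $g'(x)/g(x)\to-1/\gamma$, and a Taylor expansion at $u_n$ yields $n\,g(u_n+h/\sqrt{k})=(k+1)\bigl(1-h/(\gamma\sqrt{k})+o(1/\sqrt{k})\bigr)$ uniformly over bounded $h$. Combined with the identity $V_n(\log|X|_{n,n-k})=k$ furnished by the order-statistic definition, this produces the event-level equivalence
\[
\{\log|X|_{n,n-k}\leq u_n+c/\sqrt{k}\}=\{V_n(u_n+c/\sqrt{k})\leq k\}=\bigl\{S_n^{(i)}(u_n+c/\sqrt{k})\leq c\sqrt{k}/(\gamma\tilde{\tau}_n^{(i)})+o(1)\bigr\}
\]
for each fixed $c\in\mathbb{R}$. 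Together with a stochastic-equicontinuity step $S_n^{(i)}(u_n+c/\sqrt{k})-S_n^{(i)}(u_n)\overset{P}{\to}0$ and (\ref{2.2.10.1}), this identifies the joint limit of $\sqrt{k}(\log|X|_{n,n-k}-u_n)$ and $S_n^{(i)}(u_n)$, directly giving (\ref{2.2.10.4}); the resulting tightness $\sqrt{k}(\log|X|_{n,n-k}-u_n)=O_P(1)$ fed back into the Taylor expansion of $T_n$ at the random threshold yields (\ref{2.2.10.3}).

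The main obstacle is the stochastic equicontinuity $S_n^{(i)}(u_n+c/\sqrt{k})-S_n^{(i)}(u_n)\overset{P}{\to}0$. In the iid case this would follow from a routine binomial-variance bound on the banded count $V_n(u_n)-V_n(u_n+c/\sqrt{k})$, whose mean is of order $\sqrt{k}$ and whose centered deviation is therefore $O_P(k^{1/4})=o_P(\tilde{\tau}_n^{(i)})$; under the dependence one must rerun the block machinery at the shifted threshold to verify that the limiting covariance structure of Result~\ref{Res:Variances} is insensitive to a vanishing shift by $c/\sqrt{k}$. The locally uniform convergence built into (\ref{u_nsim_b})--(\ref{u_nsim_b2}) is precisely what makes this step work, but it is the one place in the argument where nothing is gratis.
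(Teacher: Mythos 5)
Your proposal is correct and follows essentially the same route as the paper: the paper obtains (\ref{2.2.10.1}) by writing $S_n^{(i)}(u_n)$ as the normalized exceedance count times $\tau_n^{(I)}/\tilde{\tau}_n^{(i)}$ and citing the blocks-under-mixing CLT of Rootz\'en et al.\ (their Theorem 6.2) together with Result \ref{Res:Variances}, and it obtains (\ref{2.2.10.3})--(\ref{2.2.10.4}) by citing their Lemma 4.2, which is precisely the order-statistic inversion you reconstruct by hand. The single substantive condition you flag as ``the one place where nothing is gratis'' --- stochastic equicontinuity of $S_n^{(i)}$ under an $O(1/\sqrt{k})$ shift of the threshold --- is exactly the paper's condition (\ref{Reshelp}), i.e.\ negligibility of the block variance of the banded count, which the paper verifies by the Resnick--St\^aric\^a computations underlying (\ref{u_nsim_b})--(\ref{u_nsim_b2}).
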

\begin{proof} Let $i = 1$ and recall $Y_j = \log(|X_j|/\exp(u_n))$. By definition
\begin{equation*}
 S_n^{(1)}(u_n) = \frac{\sum_{j=1}^n \left[ \mathbf{1}\{Y_j > 0 \} - \E(\mathbf{1}\{Y_1 > 0 \} ) \right] }{\tau^{(I)}_n} \cdot \frac{\tau^{(I)}_n}{\tilde{\tau}_n^{(1)}} := \text{I}_1  \cdot \text{I}_2 .
\end{equation*}
Using all our assumptions, it is straightforward that all conditions of Theorem 6.2 of \cite{Rootzen2} hold for the transformed sequence $\{\log(|X_j|)\}$ and tail array function $\phi(x) = \mathbf{1}\{x > 0\}$ and this simply means that $\text{I}_1 \overset{D}{\to} N(0,1)$ as $n,k,n/k \to \infty$. With the help of Result \ref{Res:Variances} it is then easy to see that $\text{I}_2 \to \sqrt{(1 + 2 \varphi_1)/\kappa_1(\gamma,r)}$ as $n,k,n/k \to \infty$. This shows (\ref{2.2.10.1}) for $i=1$.
\newline To show (\ref{2.2.10.3}) and (\ref{2.2.10.4}), it suffices to check whether all conditions of Lemma 4.2 of \cite{Rootzen1} are again satisfied for $\{\log(|X_j|)\}$ and the task then reduces to prove that for \emph{any} non-random sequence $z_n$ with $k(z_n - u_n)/\tilde{\tau}_n^{(1)}$ bounded we have $S^{(1)}_n(z_n) - S^{(1)}_n(u_n) \overset{P}{\to} 0$ as $n,k,n/k \to \infty$. By the very same paper (pp. 21-23), this condition is satisfied if for $v_n := \exp(z_n)$, $I_n := [u_n,\log(v_n))$ in case $\log(v_n) > u_n$ and $I_n := [\log(v_n),u_n)$ in case $\log(v_n) < u_n$:
\begin{equation}
\frac{n}{r_n \left(\tilde{\tau}_n^{(1)}\right)^2} \Var \left(\sum_{j=1}^{r_n} \mathbf{1} \{\log(|X_j|) \in I_n\} \right) \to 0 \label{Reshelp}
\end{equation}
whenever $v_n/\exp(u_n) \to 1$ as $n,k,n/k \to \infty$. But going along the steps of the proof of Lemma 2.1 and Remark 2.1 of \cite{Resnick4} (or see the proof of Lemma \ref{Lem:NegCon} below) the convergence in (\ref{Reshelp}) is obvious. \\
For the case $i = 2$ analogous considerations  show the statements. \qed
\end{proof}
The next result shows the limit behavior of the difference of our selected tail array sums when two different thresholds are chosen: the random value $\log(|X|_{n,n-k})$ in the first case and $u_n$ in the second case.
\begin{res} \label{Res:Firstdiff} Let Assumptions 1-8 and (\ref{w_n3}) hold. Recall that $Y_j = \log(|X_j|) - u_n$ and define $Y'_j := \log(|X_j|) - \log(|X|_{n,n-k})$. Then for $r < 0$:
\begin{eqnarray}
&& \hspace{-1cm} \frac{1}{\tilde{\tau}_n^{(1)}}  \sum_{j = 1}^{n} \left[ \phi_1(Y'_j)  -  \phi_1(Y_j) \right] + \gamma S^{(1)}_n(u_n) \overset{P}{\to} 0, \label{Hill_Hill.alt} \\[0.1cm]
&&  \hspace{-1cm} \frac{1}{\tilde{\tau}_n^{(2)}}   \sum_{j = 1}^{n} \left[ \phi_2(Y'_j)  -  \phi_2(Y_j) \right] -  \frac{r S_n^{(2)}(u_n)}{  (1-r)}  \overset{P}{\to} 0, \label{2nd_2nd.alt}
\end{eqnarray}
as $n,k,n/k \to \infty$.
\end{res}
\begin{proof} The proof of (\ref{Hill_Hill.alt}) was originally given by \cite{Rootzen1}, pp. 15-19. Thus we will lay our focus on (\ref{2nd_2nd.alt}), which can be proved in a very similar way. To begin, let's retain from the aforementioned paper that as  $n,k,n/k \to \infty$:
\begin{equation}
R_n^{(2)} := \frac{1}{\tilde{\tau}_n^{(2)}} \sum_{j=1}^{n} (\log(|X_j|) - u_n) \mathbf{1}\left\{\log(|X|_{n,n-k}) \geq \log(|X_j|) > u_n \right\} \overset{P}{\to} 0, \label{reminder1}
\end{equation}
and similarly,
\begin{equation}
T_n^{(2)} := \frac{1}{\tilde{\tau}_n^{(2)}} \sum_{j=1}^{n} (\log(|X_j|) -  \log(|X|_{n,n-k})) \mathbf{1}\left\{u_n \geq \log(|X_j|) > \log(|X|_{n,n-k}) \right\} \overset{P}{\to} 0. \label{reminder2}
\end{equation}
Also, using standard arguments (\cite{Martig}, Lemma 1, and \cite{Resnick3}, pp. 80-85), one quickly shows that
\begin{equation}
\frac{1}{k} \sum_{j=0}^{k-1} \left( \frac{|X|_{n,n-j}}{|X|_{n,n-k}} \right)^{r/\gamma}  \overset{P}{\to} \frac{1}{(1-r)} \label{(1-r)},
\end{equation}
$n,k,n/k \to \infty$.
\vspace{0.25cm}
\newline Now, let's consider the set $\{|X|_{n,n-k} > \exp(u_n)\}$. Usually, this can be done by multiplying $\mathbf{1}\{ |X|_{n,n-k} > \exp(u_n)\}$ throughout, but for simplicity, we will not denote it and just assume $|X|_{n,n-k} > \exp(u_n)$ in the computations below (cf. \cite{Rootzen1}). Then
\begin{eqnarray*}
&&  \frac{1}{\tilde{\tau}_n^{(2)}}   \sum_{j = 1}^{n} \left[ \phi_2(Y'_j)  -  \phi_2(Y_j) \right]  \\[0.1cm]
&=& \frac{1}{\tilde{\tau}_n^{(2)}} \sum_{j = 0}^{k-1} \left[  \left( \frac{|X|_{n,n-j}}{\exp(u_n)} \right)^{r/\gamma} -   \left( \frac{|X|_{n,n-j}}{|X|_{n,n-k}} \right)^{r/\gamma}   \right]  \\[0.1cm]
&& - \frac{1}{\tilde{\tau}_n^{(2)}}  \sum_{j = 1}^n  \left( 1- \left( \frac{|X_j|}{\exp(u_n)} \right)^{r/\gamma} \right)  \mathbf{1}\left\{ |X_{n,n-k}| \geq |X_j| > \exp(u_n) \right\}  := \text{I}_1+ \text{I}_2.
\end{eqnarray*}
Next, since $|X|_{n,n-k}/ \exp(u_n) \overset{P}{\to} 1$ as $n,k,n/k \to \infty$ by (\ref{2.2.10.4}) and $z^{r/\gamma} - 1 \sim r\log(z)/\gamma$ as $z \to 1$, an application of the continuous mapping theorem yields
\begin{eqnarray*}
\text{I}_1 &=& \frac{1}{\tilde{\tau}_n^{(2)}} \sum_{j = 0}^{k-1}  \left( \frac{|X|_{n,n-j}}{|X|_{n,n-k}} \right)^{r/\gamma}  \left( \left( \frac{|X|_{n,n-k}}{\exp(u_n)} \right)^{r/\gamma} -   1 \right) \\[0.1cm]
&=& \frac{r}{\gamma} \frac{k}{\tilde{\tau}_n^{(2)}}   \left( \log(|X|_{n,n-k}) - u_n  \right)  (1 + o_P(1)) \frac{1}{k}  \sum_{j = 0}^{k-1}  \left( \frac{|X|_{n,n-j}}{|X|_{n,n-k}} \right)^{r/\gamma} \\[0.1cm]
&=& \frac{r}{\gamma (1-r)} \frac{k}{\tilde{\tau}_n^{(2)}}   \left( \log(|X|_{n,n-k}) - u_n  \right) (1 + o_P(1)) =  \frac{r}{  (1-r)}  S_n^{(2)}(u_n)+ o_P(1),
\end{eqnarray*}
where (\ref{2.2.10.1}), (\ref{2.2.10.4}) and (\ref{(1-r)}) were used in the last line.
\newline Finally, a simple application of the mean value theorem shows that $|\text{I}_2| \leq   - r R_n^{(2)}/\gamma$, but this simply means $ \text{I}_2 \overset{P}{\to} 0$ as $n,k,n/k \to \infty$ by (\ref{reminder1}).
\newline The same result holds on the set $\{ |X|_{n,n-k} < \exp(u_n) \}$ using similar arguments as above, (\ref{reminder2}) and by noticing that $\frac{1}{k}\sum_{j = 1}^n \mathbf{1} \{|X_j|> \exp(u_n) \} \overset{P}{\to} 1$ as $n,k,n/k \to \infty$. \qed \end{proof}
The next step will be to give the link between the tail array sums related to the likelihood moment estimators and our tail array sums selected at the very beginning of the section. The crucial point here is to make use of both the second-order and the second-order extended regular variation introduced in Assumption 9. With the help of Lemma B.3.16 of \cite{deHaan}, we quickly conclude that if $b_{|X|} \in $ 2ERV$(\rho, A)$, then
\begin{equation}
\frac{\gamma b_{|X|}(n/k)}{ \sigma(n/k) } = 1 + o(1/\sqrt{k}) \label{A(t)_1}
\end{equation}
as $n,k,n/k \to \infty$. Furthermore, going along the steps of \cite{Hsing}, pp. 1552-1554, we also deduce that if $1 - \F_{|X|} \in $ RV$(\rho', A^*)$, then $1-\F_{|X|}(b_{|X|}(n/k)) = k(1+o(1/\sqrt{k}))/n$. But since we chose $u_n$ such that $1-\F_{|X|}(\exp(u_n)) $ $= (k+1)/n$, it follows that
\begin{equation}
\frac{1-\F_{|X|}(\exp(u_n))}{1-\F_{|X|}(b_{|X|}(n/k))} = 1 + o(1/\sqrt{k}) \label{tailratio_k}
\end{equation}
and thus $A^*(b_{|X|}(n/k)) \sim A^*(\exp(u_n))$ and finally
\begin{equation}
\sqrt{k} \left( \frac{n}{k} \E\left( \phi_i(Y_1) \right) - \beta_i \right) \to 0, \label{Expconv}
\end{equation}
for $i = 1,2$, $\beta_1 := \gamma$  and $  \beta_2 := -r/(1-r)$. In conclusion, considering Assumption 8,
\begin{eqnarray*}
\frac{1-\F_{|X|}(\exp(u_n))}{1-\F_{|X|}(b_{|X|}(n/k))} - 1  &\sim&  \log\left(\frac{1-\F_{|X|}(\exp(u_n))}{1-\F_{|X|}(b_{|X|}(n/k))} \right) \nonumber \\[0.2cm] &=& \log (1-\F_{|X|}(\exp(u_n))) - \log( 1-\F_{|X|}(b_{|X|}(n/k))) \nonumber \\[0.2cm] &=& - \int_1^{\exp(u_n)/b_{|X|}(n/k)} \frac{ b_{|X|}(n/k) s \F'_{|X|}(s b_{|X|}(n/k))}{1-\F_{|X|}(s  b_{|X|}(n/k))} \, \frac{ds}{s} \nonumber
\\[0.2cm] &\sim& - \frac{1}{\gamma} \int_1^{\exp(u_n)/b_{|X|}(n/k)} \frac{ds}{s}\nonumber  = - \frac{1}{\gamma}\log(\exp(u_n)/b_{|X|}(n/k)) \\[0.25cm] &\sim& \frac{1}{\gamma} \left( \frac{b_{|X|}(n/k)}{\exp(u_n)} -1 \right),
\end{eqnarray*}
which simply means $b_{|X|}(n/k)/\exp(u_n) = 1 + o(1/\sqrt{k})$ by virtue of (\ref{tailratio_k}) and thus, using (\ref{A(t)_1}),
\begin{equation}
\frac{\gamma \exp(u_n)}{ \sigma(n/k) } = 1 + o(1/\sqrt{k}) \label{A(t)_2},
\end{equation}
 as $n,k,n/k \to \infty$.
\begin{res} \label{Res:Seconddiff} Let the Assumptions 1-9 and (\ref{w_n3}) hold. Recall that $Y'_j = \log(|X_j|) - \log(|X|_{n,n-k})$ and define $Y''_j := |X|_{n,n-j} - |X|_{n,n-k}$. Then:
\begin{eqnarray}
&&  \frac{1}{\tilde{\tau}_n^{(1)}}  \sum_{j = 0}^{k-1} \left[ \log \left( 1 + \frac{\gamma Y''_j}{\sigma(n/k)} \right) -   \phi_1(Y'_j ) \right]   -  \frac{\gamma^2 S_n^{(1)}(u_n)}{(\gamma + 1)}  \overset{P}{\to} 0, \label{Hill_Hill.alt.2} \\[0.1cm]
&&  \frac{1}{\tilde{\tau}_n^{(2)}}  \sum_{j = 0}^{k-1} \left[ \left( 1 + \frac{\gamma  Y''_j }{\sigma(n/k)} \right)^{r/\gamma} -(1 -  \phi_2(Y'_j)) \right]    -  \frac{\gamma r S_n^{(2)}(u_n)}{  (1-r)(1-r+\gamma)}  \overset{P}{\to} 0 \nonumber , \\ \label{2nd_2nd.alt.2}
\end{eqnarray}
as $n,k,n/k \to \infty$.
\end{res}
\begin{proof}
With the help of \cite{Martig}, Lemma 1, and (\ref{A(t)_1}) it is easy to see that for any $j < k$ we always have \vspace{-0.1cm}
\begin{equation}
\frac{|X|_{n,n-k}}{|X|_{n,n-j}} = O_P(1) \: \: \: \: \: \text{and} \: \: \: \: \: \frac{\gamma \, |X|_{n,n-k}}{\sigma(n/k)} \overset{P}{\to} 1 \label{twoprelim}
\end{equation}
as $n,k,n/k \to \infty.$
To prove (\ref{Hill_Hill.alt.2}), notice that
\vspace{-0.2cm}
\begin{eqnarray*}
&& \frac{1}{\tilde{\tau}_n^{(1)}}  \sum_{j = 0}^{k-1} \left[ \log \left( 1 + \frac{\gamma Y''_j}{\sigma(n/k)} \right)  -  \phi_1(Y'_j) \right] \\[0.1cm]
&=& \frac{1}{\tilde{\tau}_n^{(1)}}  \sum_{j = 0}^{k-1}   \log \left( \frac{|X|_{n,n-k}}{|X|_{n,n-j}} \left(1 -  \frac{\gamma \, |X|_{n,n-k}}{\sigma(n/k)}  \right)  +  \frac{\gamma \, |X|_{n,n-k}}{\sigma(n/k)}  \right) := \frac{1}{\tilde{\tau}_n^{(1)}}  \sum_{j = 0}^{k-1}   \log (A_{j,k,n}).
\end{eqnarray*}
Now, due to (\ref{twoprelim}), $A_{j,k,n} = 1 + o_P(1)$ uniformly for  $j<k$ as $n,k,n/k \to \infty$ and hence,
using the fact that $\log(z) \sim (z-1)$ as $z \to 1$, (\ref{2.2.9.2}), (\ref{2.2.10.4}), (\ref{(1-r)}), (\ref{A(t)_2}) and the continuous mapping theorem yield
\vspace{-0.2cm}
\begin{eqnarray*}
&& \frac{1}{\tilde{\tau}_n^{(1)}}  \sum_{j = 0}^{k-1}   \log (A_{j,k,n}) \\[0.1cm]
&=& \frac{1}{\tilde{\tau}_n^{(1)}}  \sum_{j = 0}^{k-1}  \left[ \frac{|X|_{n,n-k}}{|X|_{n,n-j}} \left(1 -  \frac{\gamma \, |X|_{n,n-k}}{\sigma(n/k)}  \right)  +  \frac{\gamma \, |X|_{n,n-k}}{\sigma(n/k)}   - 1 \right] (1+o_P(1)) \\[0.1cm]
&=& \frac{k}{\tilde{\tau}_n^{(1)}}  \left(\frac{\gamma \, |X|_{n,n-k}}{\sigma(n/k)} -1  \right)  \frac{1}{k} \sum_{j = 0}^{k-1}  \left( 1 - \frac{|X|_{n,n-k}}{|X|_{n,n-j}} \right) (1+o_P(1)) \\[0.1cm]
  &=& \frac{\gamma}{(\gamma + 1)} \frac{\sqrt{k}}{\tilde{\tau}_n^{(1)}} \left[ \sqrt{k} \: \frac{\gamma  \exp(u_n) }{ \sigma(n/k) } \left(\frac{|X|_{n,n-k}}{ \exp(u_n)}  - 1 \right) +  \sqrt{k} \left(\frac{\gamma  \exp(u_n)}{ \sigma(n/k) }   -1 \right)  \right]   (1+o_P(1)) \\[0.1cm]
&=& \frac{\gamma}{(\gamma + 1)} \frac{k}{\tilde{\tau}_n^{(1)}} \left(\frac{|X|_{n,n-k}}{ \exp(u_n)}  - 1 \right)   (1+o_P(1)) \\[0.1cm]
&=&  \frac{\gamma}{(\gamma + 1)} \frac{k}{\tilde{\tau}_n^{(1)}} \left(\log(|X|_{n,n-k}) - u_n  \right)(1+o_P(1)) = \frac{\gamma^2 S_n^{(1)}(u_n)}{(\gamma + 1)}  + o_P(1).
\end{eqnarray*}
\vspace{-0.25cm}
\newline
This shows (\ref{Hill_Hill.alt.2}). The proof of (\ref{2nd_2nd.alt.2}) is similar by noticing that $1 - z^{r/\gamma} \sim r(1-z)/\gamma$  as $z \to 1$. \qed
\end{proof}
Now we are ready to show the joint asymptotic normality of the two tail array sums involved in the likelihood moment estimation (\ref{LME1}) and (\ref{LME2}). As usual, the idea is to use the Cram\'er-Wold device.
\begin{theo} \label{Theo:Bivariate} Let the Assumptions 1-9 hold. Further assume that there exists a positive sequence $w_n$, $w_n \to \infty$ as $n \to \infty$, such that for some $0 < \varepsilon < 1/\gamma \wedge 1$:
\begin{equation}
  r_n w_n \exp \left( (\varepsilon - 1/\gamma)w_n \right) \to 0 \label{w_n4}
\end{equation}
and
\begin{equation}
\frac{r_n w_n}{\sqrt{k}} \to 0 \label{w_n5}
\end{equation}
as $n,k,n/k \to \infty$. Also, denote
\begin{eqnarray*}
&& \varsigma_1 := \sqrt{k} \left( \frac{1}{k} \sum_{j = 0}^{k-1} \log \left( 1 + \frac{\gamma Y''_j}{\sigma(n/k)}  \right)  - \gamma \right), \\[0.1cm]
&& \varsigma_ 2:= \sqrt{k} \left(  \frac{1}{k}  \sum_{j = 0}^{k-1}  \left( 1 + \frac{\gamma Y''_j}{\sigma(n/k)}  \right)^{r/\gamma} - \frac{1}{(1-r)}  \right)
\end{eqnarray*}
and $\boldsymbol{\varsigma}:= (\varsigma_1,\varsigma_2)^T$. Then
\begin{equation*}
\boldsymbol{\varsigma} \overset{D}{\to} N( \boldsymbol{0},\Sigma),
\end{equation*}
as $n,k,n/k \to \infty$, where
\begin{equation}
\Sigma := \left(
  \begin{array}{cc} \kappa_1(\gamma,r) & - \kappa_3(\gamma,r)  \\[0.25cm]- \kappa_3(\gamma,r) & \kappa_2(\gamma,r)
  \end{array}
  \right), \label{variance}
\end{equation}
with $\kappa_1(\gamma,r), \kappa_2(\gamma,r), \kappa_3(\gamma,r)$ given in (\ref{2.2.9.2})-(\ref{2.2.9.6}).
\end{theo}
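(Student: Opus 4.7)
The plan is the Cramér--Wold device. For arbitrary $\lambda_1, \lambda_2 \in \mathbb{R}$, the goal is to show that $\lambda_1 \varsigma_1 + \lambda_2 \varsigma_2 \overset{D}{\to} N(0,\,\lambda_1^2 \kappa_1(\gamma,r) + \lambda_2^2 \kappa_2(\gamma,r) - 2\lambda_1\lambda_2 \kappa_3(\gamma,r))$, which is equivalent to the bivariate statement with covariance matrix $\Sigma$.

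\textbf{Step 1 (reduction to centered sums in $Y_j$).} The first move is to cascade Result~\ref{Res:Seconddiff} with Result~\ref{Res:Firstdiff}: multiplying each by $\tilde{\tau}_n^{(i)}/\sqrt{k}$ and combining, the sums in $Y''_j$ entering $\varsigma_i$ get rewritten as $k^{-1/2}\sum_{j=1}^n \phi_i(Y_j)$ plus a scalar multiple of the centered indicator sum $Q_n := \sum_{j=1}^n [\mathbf{1}\{Y_j>0\} - \Prob(Y_1>0)]$, up to $o_P(1)$; the underlying coefficient identities are $\gamma^2/(\gamma+1) - \gamma = -\beta'_1$ and $-r/(1-r) + \gamma r / ((1-r)(1-r+\gamma)) = \beta'_2$. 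Decomposing $\phi_i = \tilde{\phi}_i + \beta'_i \mathbf{1}\{\,\cdot>0\}$ exposes another copy of $\beta'_i Q_n/\sqrt{k}$ of the opposite sign, so the two $Q_n$ contributions cancel exactly. The deterministic centerings $\sqrt{k}\,\gamma$ and $\sqrt{k}/(1-r)$ are absorbed by the mean convergence (\ref{Expconv}) together with (A.5), leaving
$$\varsigma_1 = T_n^{(1)} + o_P(1), \qquad \varsigma_2 = -T_n^{(2)} + o_P(1), \qquad T_n^{(i)}:= \frac{1}{\sqrt{k}}\sum_{j=1}^n \bigl[\tilde{\phi}_i(Y_j) - \E \tilde{\phi}_i(Y_1)\bigr].$$

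\textbf{Step 2 (univariate CLT via Cramér--Wold).} The linear combination $\lambda_1 T_n^{(1)} - \lambda_2 T_n^{(2)}$ is itself a single centered tail array sum $k^{-1/2}\sum_{j=1}^n [\psi_\lambda(Y_j) - \E \psi_\lambda(Y_1)]$ for $\psi_\lambda := \lambda_1 \tilde{\phi}_1 - \lambda_2 \tilde{\phi}_2$ on the transformed sequence $\{\log|X_j|\}$. To this sum I would apply Theorem~6.2 of \cite{Rootzen2}: the strong-mixing hypotheses are (A.6)-(A.7); the block-variance hypothesis is supplied by Result~\ref{Res:Variances} via bilinearity, the limit being $\lambda_1^2 \kappa_1 + \lambda_2^2 \kappa_2 - 2\lambda_1\lambda_2 \kappa_3$, which is exactly the required $\Sigma$-quadratic form; and the remaining truncation/negligibility conditions are exactly where (\ref{w_n4})-(\ref{w_n5}) enter.

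\textbf{Main obstacle.} The principal difficulty lies in Step~2, because $\tilde{\phi}_1(x) = x - \beta'_1$ for $x > 0$ is unbounded. One has to truncate $Y_j$ at a level $w_n \to \infty$ and then show separately (i) that the variance contribution of the pruned piece $\{Y_j > w_n\}$ is of smaller order than $k$, and (ii) that the truncated bounded sum satisfies a blocks-of-blocks CLT under the mixing rate (A.7). Step (i) is what the new condition (\ref{w_n4}) is engineered for: combining it with the joint tail estimate (\ref{u_nsim_b2}) and the regular variation (\ref{u_nsim_b}) controls the variance contribution of the tail, while (\ref{w_n5}) gives the block-growth rate needed in step (ii). A secondary, mechanical but delicate issue is the bookkeeping in Step~1: the two cancellations (of the two $Q_n$-contributions and of the deterministic centerings) must be verified term by term against the exact values of $\beta'_1, \beta'_2$ produced by cascading Results~\ref{Res:Firstdiff}/\ref{Res:Seconddiff}.
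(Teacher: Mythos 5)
Your overall architecture --- Cram\'er--Wold, then the cascade of Results \ref{Res:Seconddiff} and \ref{Res:Firstdiff} together with (\ref{Expconv}) to reduce $\varsigma_1$ and $-\varsigma_2$ to the centered tail array sums $k^{-1/2}\sum_{j=1}^n[\tilde{\phi}_i(Y_j)-\E(\tilde{\phi}_i(Y_1))]$, then a CLT for tail array sums from Rootz\'en et al.\ --- is exactly the paper's, and your Step~1 bookkeeping is correct: the identities $\gamma^2/(\gamma+1)-\gamma=-\beta'_1$ and $-r/(1-r)+\gamma r/((1-r)(1-r+\gamma))=\beta'_2$, the resulting cancellation of the centered indicator sums, and the absorption of the deterministic centerings via (\ref{Expconv}) and (A.5) all check out against the paper.

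The gap is in Step~2: you propose to apply Theorem 6.2 of \cite{Rootzen2} directly to the signed combination $\psi_\lambda=\lambda_1\tilde{\phi}_1-\lambda_2\tilde{\phi}_2$. The sufficient conditions of that theorem are tailored to tail array functions of the monotone, nonnegative type (such as $\phi_1$, $\phi_2$ and the exceedance indicator, each vanishing on $(-\infty,0]$), and $\psi_\lambda$ is neither nonnegative nor monotone --- already $\tilde{\phi}_1$ jumps to the negative value $-\beta'_1$ at the origin, and the two summands carry opposite signs --- so the step as written is not licensed. What the paper does instead is invoke the general CLT, Theorem 4.1 of \cite{Rootzen2}, for the combination, verify the negligibility conditions separately (Lemma \ref{Lem:NegCon}), and then reduce the Lindeberg condition for $\psi$ to Lindeberg conditions for the three monotone, nonnegative pieces $\phi_1$, $\phi_2$ and $\mathbf{1}\{\cdot>0\}$ by the elementary inequality $(X+Y)^2\mathbf{1}\{|X+Y|\ge\varepsilon\}\le 4\left(X^2\mathbf{1}\{|X|\ge\varepsilon/2\}+Y^2\mathbf{1}\{|Y|\ge\varepsilon/2\}\right)$, using the fact that $\tau_n^{(1)},\tau_n^{(2)},\tau_n^{(I)}$ are all of the same order as $\tilde{\tau}_n$ by Result \ref{Res:Variances}; only for those three pieces is Theorem 6.2 applied, and that is where (\ref{w_n4})--(\ref{w_n5}) enter, essentially as you guessed in your ``main obstacle'' paragraph for the unbounded $\phi_1$. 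So the missing idea is this splitting of the Lindeberg condition, not the truncation itself; once it is inserted your argument goes through and coincides with the paper's proof.
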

\paragraph{Proof} Let $\alpha_1, \alpha_2 \in \mathbb{R} \diagdown \{ 0 \}$ and define $\boldsymbol{\alpha} :=  (\alpha_1,\alpha_2)^T$. If we are able to show that
\begin{equation}
 \boldsymbol{\alpha}^T  \boldsymbol{\varsigma} \overset{D}{\to} N(0, \boldsymbol{\alpha}^T \Sigma  \boldsymbol{\alpha}), \label{Cramer}
\end{equation}
$n,k, n/k \to \infty$, the desired result simply follows by the Cram\'er-Wold device.
\newline Recall that $Y_j = \log(|X_j|) - u_n$, $\beta'_1 = \gamma/(\gamma + 1)$, $\beta'_2 = -r/(1-r + \gamma)$ and $\tilde{\phi}_i(x) = \phi_i(x) - \beta'_{i} \mathbf{1}\{x > 0 \}$, $i=1,2$. Then, by virtue of Result \ref{Res:Variances}, Result \ref{Res:Firstdiff}, Result \ref{Res:Seconddiff}, (\ref{Expconv}) and the definition of $S_n^{(i)}(x), i = 1,2$, it is easy to see that
\vspace{-0.1cm}
\begin{eqnarray*}
\varsigma_1 &=&  \frac{1}{\sqrt{k}}   \left(   \sum_{j = 1}^{n}  \left[ \tilde{\phi}_1(Y_j) -   \E \left(  \tilde{\phi}_1(Y_1) \right) \right] \phantom{\sum_{j = 1}^{n} } \hspace{-0.55cm} \right) + o_P(1), \\[0.1cm]
\varsigma_2 &=&  - \frac{1}{\sqrt{k}}   \left(   \sum_{j = 1}^{n}  \left[ \tilde{\phi}_2(Y_j) -   \E \left(  \tilde{\phi}_2(Y_1) \right) \right] \phantom{\sum_{j = 1}^{n} } \hspace{-0.55cm} \right) + o_P(1), \\[0.1cm]
\end{eqnarray*}
Thus, writing $\tilde{\phi}(x) :=\alpha_1 \tilde{\phi}_1(x) - \alpha_2 \tilde{\phi}_2(x)$, the left hand side of (\ref{Cramer}) may be rewritten as
\begin{equation}
\boldsymbol{\alpha}^T  \boldsymbol{\varsigma} =  \frac{\tilde{\tau}_n}{\sqrt{k}}   \left( \frac{1}{\tilde{\tau}_n} \sum_{j = 1}^{n}  \left[ \tilde{\phi}(Y_j)  - \E ( \tilde{\phi}(Y_1) ) \right] \right) + o_P(1), \label{Cramer1}
\end{equation}
\vspace{-0.2cm}
\newline
where $\tilde{\tau}_n^2$ is the block-sum variance of the tail array sum with function $\tilde{\phi}$, i.e.
\vspace{-0.1cm}
\begin{equation*}
\tilde{\tau}_n^2 := \frac{n}{r_n} \Var\left(\sum_{j = 1}^{r_n} \tilde{\phi} (Y_j) \right) = \boldsymbol{\alpha}^T \Sigma_n \boldsymbol{\alpha}
\end{equation*}
with
\begin{equation*}
  \Sigma_n := \left(
  \begin{array}{cc} \left(\tilde{\tau}_n^{(1)}\right)^2 & -\tilde{\tau}_n^{(1,2)} \\[0.1cm] -\tilde{\tau}_n^{(1,2)} & \left(\tilde{\tau}_n^{(2)}\right)^2
  \end{array}
  \right).
\end{equation*}
From (\ref{2.2.9.2})-(\ref{2.2.9.6}), we simply conclude that $\tilde{\tau}_n/ \sqrt{k} = \sqrt{\boldsymbol{\alpha}^T \Sigma_n \boldsymbol{\alpha} / k}\to \sqrt{\boldsymbol{\alpha}^T \Sigma \boldsymbol{\alpha}}$ as $n,k,n/k$ $\to \infty$ and hence, by a simple application of Slutsky's Theorem on (\ref{Cramer1}), the proof is complete if we are able to show that for $n,k, n/k \to \infty$:
\vspace{-0.1cm}
\begin{equation*}
\frac{1}{\tilde{\tau}_n} \sum_{j = 1}^{n}  \left[ \tilde{\phi}(Y_j)  - \E ( \tilde{\phi}(Y_1) ) \right] \overset{D}{\to} N(0,1).
\end{equation*}
But this just means that we need to check that all the conditions up to the Lindeberg Condition of Theorem 4.1 in Rootz\'en et al. (\cite{Rootzen2}) are satisfied for the function $\psi(x) := (\tilde{\phi}(x) - \E(\tilde{\phi}(Y_1))) /\tilde{\tau}_n$. Clearly, their basic Assumptions and their (4.1) hold due to our Assumptions 5-7 and $\psi$ satisfies their (4.2) by definition. Furthermore, their negligibility conditions (2.3) hold by Lemma \ref{Lem:NegCon} below and their Corollary 2.2. so that it only suffices to check whether the Lindeberg Condition holds. Therefore, define $Z := \tilde{\tau}_n^{-1} \sum_{j=1}^{r_n}[\tilde{\phi}(Y_j) - \E(\tilde{\phi}(Y_1))]$, $Z_i := \left( \tau_n^{(i)} \right)^{-1} \sum_{j=1}^{r_n}[\phi_i(Y_j) - \E(\phi_i(Y_1))], i = 1,2$ and $Z_I := \left( \tau_n^{(I)} \right)^{-1} \sum_{j=1}^{r_n}[\mathbf{1}\left\{ Y_j > 0 \right\} $ $ - \Prob( Y_j > 0)]$. Also notice from (\ref{2.2.9.2})-(\ref{2.2.9.6}) that there always exists a $n_0 > 0$ and a constant $K > 0$ such that for any $n > n_0$: $\tau_n^{(i)}, \tau_n^{(I)} \leq K \tilde{\tau}_n$, $i =1,2$. Using this fact and with the help of their inequality (6.4) which holds for any r.v.s $X, Y$ and $\varepsilon > 0$:
\vspace{-0.1cm}
\begin{equation*}
(X + Y)^2 \mathbf{1} \{|X + Y| \geq \varepsilon \} \leq 4 (  X^2 \mathbf{1} \{|X| \geq \varepsilon/2 \} +  Y^2 \mathbf{1} \{|Y| \geq \varepsilon/2 \} ) ,
\end{equation*}
we conclude by definition of $\tilde{\phi}$ and denoting $A:= \alpha_2 \beta'_2 - \alpha_1 \beta'_1 $ that for any $n > n_0$:
\vspace{-0.15cm}
\begin{eqnarray}
&& m_n \E ( Z^2 \mathbf{1} \{|Z| > \varepsilon\}) \nonumber \\[0.25cm]
&&  =  m_n \E \left[ \left( \alpha_1 \frac{\tau_n^{(1)}}{\tilde{\tau}_n} Z_1  + \left( - \alpha_2 \frac{\tau_n^{(2)}}{\tilde{\tau}_n} Z_2 \right) + A \, \frac{\tau_n^{(I)}}{\tilde{\tau}_n} Z_I \right)^2 \right. \nonumber \\[0.1cm]
&& \hspace{+ 0.5cm}\cdot \left. \mathbf{1} \left\{\left| \alpha_1 \frac{\tau_n^{(1)}}{\tilde{\tau}_n} Z_1  + \left( - \alpha_2 \frac{\tau_n^{(2)}}{\tilde{\tau}_n} Z_2 \right) + A \,  \frac{\tau_n^{(I)}}{\tilde{\tau}_n} Z_I \right| > \varepsilon \right\}^{\phantom{2}} \hspace{-0.15cm} \right] \nonumber \\[0.1cm]
&&  \leq  4 m_n  \E \left(  \left(\alpha_1 \frac{\tau_n^{(1)}}{\tilde{\tau}_n} Z_1  + \left( - \alpha_2 \frac{\tau_n^{(2)}}{\tilde{\tau}_n} Z_2 \right) \right)^2  \mathbf{1} \left\{\left| \alpha_1 \frac{\tau_n^{(1)}}{\tilde{\tau}_n} Z_1  + \left( - \alpha_2 \frac{\tau_n^{(2)}}{\tilde{\tau}_n} Z_2 \right) \right| > \varepsilon/2 \right\} \right)   \nonumber \\[0.1cm]
&& \hspace{+ 0.5cm} + 4 m_n  \left( A \,  \frac{\tau_n^{(I)}}{\tilde{\tau}_n} \right)^2  \E \left( Z_I^2 \mathbf{1} \left\{\left| Z_I   \right| > \varepsilon \, / \left|2  A \,  \frac{\tilde{\tau}_n}{\tau_n^{(I)}} \right| \right\} \right) \nonumber \\[0.25cm]
&&  \leq  16 m_n \left(\alpha_1 \frac{\tau_n^{(1)}}{\tilde{\tau}_n} \right)^2 \E \left(  Z_1^2  \mathbf{1} \left\{\left| Z_1  \right| > \varepsilon \, / \left|4 \alpha_1 \frac{\tilde{\tau}_n}{\tau_n^{(1)}} \right|\right\} \right)   \nonumber \\[0.1cm]
&& \hspace{+ 0.5cm} + 16 m_n \left(\alpha_2 \frac{\tau_n^{(2)}}{\tilde{\tau}_n} \right)^2 \E \left(  Z_2^2  \mathbf{1} \left\{\left|  Z_2  \right| > \varepsilon \, / \left| 4 \alpha_2 \frac{\tilde{\tau}_n}{\tau_n^{(2)}} \right| \right\} \right)   \nonumber \\[0.1cm]
&& \hspace{+ 0.5cm} + 4 m_n  \left( A \,  \frac{\tau_n^{(I)}}{\tilde{\tau}_n} \right)^2  \E \left( Z_I^2 \mathbf{1} \left\{\left| Z_I   \right| > \varepsilon \, / \left|2 A \,  \frac{\tilde{\tau}_n}{\tau_n^{(I)}} \right| \right\} \right)  \nonumber \\[0.25cm]
&&  \leq \text{const.} \left[ \, m_n \E \left( Z_1^2 \mathbf{1} \{|Z_1| > \varepsilon/(4 \, |\alpha_1| K) \}\right) + m_n \E \left( Z_2^2 \mathbf{1} \{|Z_2| > \varepsilon/(4 \, |\alpha_2| K) \}\right)\right. \nonumber \\[0.25cm]
&& \hspace{+ 0.5cm} + \left. m_n \E \left( Z_I^2 \mathbf{1} \{|Z_I| > \varepsilon/(2 |A| K) \}\right) \right]. \label{Lindetwo}
\end{eqnarray}
But this simply means that we have to check whether the Lindeberg Condition holds for each tail array block-sum with respective function $(\phi_1(x) - \E(\phi_1(Y_1)))/\tau_n^{(1)}$, $(\phi_2(x) - \E(\phi_2(Y_1)))/\tau_n^{(2)}$ and $(\mathbf{1}\{x > 0\} - \Prob(Y_1 > 0))/\tau_n^{(I)} $, which is equivalent to check whether all conditions of their Theorem 6.2. are satisfied for the aforementioned functions. Due to our Assumptions 1-3, their (6.1) clearly holds for $\{\log(|X_j|)\}$ and so does their (2.4) by virtue of their Lemma 4.3. The other conditions are readily checked so that all terms in (\ref{Lindetwo}) finally converge to zero. \qed
\begin{lem} \label{Lem:NegCon}  Suppose the Assumptions 1-7 hold. Then, using the very same definitions and notations as in the proof of Theorem \ref{Theo:Bivariate},
\begin{align}
& \frac{n}{r_n \tilde{\tau}^2_n} \Var\left(\sum_{j = 1}^{l_n} \tilde{\phi} (Y_j) \right) \to 0, \label{2.2.15.1} \\[0.1cm]
&\tilde{\tau}_{n}^{-2} \Var\left(\sum_{j = 1}^{n -r_n m_n} \tilde{\phi} (Y_j) \right) \to 0,  \label{2.2.15.2}
\end{align}
as $n,k,n/k \to \infty$.
\end{lem}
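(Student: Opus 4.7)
The plan is to reduce both claims to a single elementary variance estimate. Since $\tilde\phi = \alpha_1 \tilde\phi_1 - \alpha_2 \tilde\phi_2$, the bilinearity of covariance together with the Cauchy--Schwarz inequality gives, for every $m$,
\[
\Var\Bigl(\sum_{j=1}^{m} \tilde\phi(Y_j)\Bigr) \;\le\; 2\alpha_1^2\, \Var\Bigl(\sum_{j=1}^{m} \tilde\phi_1(Y_j)\Bigr) + 2\alpha_2^2\, \Var\Bigl(\sum_{j=1}^{m} \tilde\phi_2(Y_j)\Bigr),
\]
so it suffices to prove \eqref{2.2.15.1} and \eqref{2.2.15.2} with $\tilde\phi$ replaced by $\tilde\phi_1$ and by $\tilde\phi_2$ separately.

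The core step is to establish the uniform bound
\[
V_m^{(i)} := \Var\Bigl(\sum_{j=1}^{m} \tilde\phi_i(Y_j)\Bigr) \;\le\; C_i\, m\, \frac{k}{n}, \qquad i=1,2,
\]
valid for every $m \le r_n$ and sufficiently large $n$. For the diagonal contribution, $\tilde\phi_i(Y_1)$ vanishes off $\{Y_1 > 0\}$, which has probability $(k+1)/n$ by (A.5), and the regular variation \eqref{u_nsim_b} ensures that the conditional law of $Y_1$ given $Y_1 > 0$ has bounded second moment of $\tilde\phi_i$; hence $\Var(\tilde\phi_i(Y_1)) = O(k/n)$. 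For the off-diagonal contributions, the joint-tail formula \eqref{u_nsim_b2} expresses $(n/k)\,\Cov(\tilde\phi_i(Y_1),\tilde\phi_i(Y_{1+j}))$ in terms of $\sum_{h\ge 0}(|c_h| \wedge |c_{h+j}|)^{1/\gamma}$, with an additional logarithmic factor in the case $i=1$. Assumption 4 together with the exponential decay in (A.3) then makes $\sum_{j \ge 1}|\Cov(\tilde\phi_i(Y_1),\tilde\phi_i(Y_{1+j}))|$ summable with a bound independent of $n$ and of total order $O(k/n)$, so that summing the $m$ diagonal terms and the at most $2m$ off-diagonal pairs yields the displayed bound on $V_m^{(i)}$. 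This is exactly the covariance computation that underlies Result \ref{Res:Variances}, now carried out for a generic sum length in place of $r_n$.

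With this bound in hand, the two conclusions become simple arithmetic, because \eqref{2.2.9.2}--\eqref{2.2.9.6} give $\tilde\tau_n^2/k \to \boldsymbol{\alpha}^T\Sigma\boldsymbol{\alpha} > 0$, hence $\tilde\tau_n^2 \asymp k$. For \eqref{2.2.15.1},
\[
\frac{n}{r_n\, \tilde\tau_n^2}\, \Var\Bigl(\sum_{j=1}^{l_n} \tilde\phi(Y_j)\Bigr) \;\lesssim\; \frac{n}{r_n\, k}\cdot \frac{l_n\, k}{n} \;=\; \frac{l_n}{r_n},
\]
which tends to $0$ because $l_n/r_n \sim m_n l_n/n \to 0$ by (A.7). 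For \eqref{2.2.15.2}, the length of the leftover block satisfies $n - r_n m_n < r_n$ by definition of $m_n = [n/r_n]$, so
\[
\tilde\tau_n^{-2}\, \Var\Bigl(\sum_{j=1}^{n-r_n m_n} \tilde\phi(Y_j)\Bigr) \;\lesssim\; \frac{1}{k}\cdot \frac{r_n k}{n} \;=\; \frac{r_n}{n},
\]
which vanishes since $k r_n = o(n)$ and $k \ge 1$.

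The only non-routine point is the \emph{uniformity in $n$} of the absolute covariance series underlying the variance bound of the second paragraph: because $m$ is allowed to grow with $n$, one cannot simply invoke a pointwise-in-$n$ summability. This uniformity is precisely what Assumption 4 buys (for $\tilde\phi_1$ through the logarithmic weight, for $\tilde\phi_2$ and the indicator through the plain $(|c_h| \wedge |c_{h+j}|)^{1/\gamma}$ series), and no further new idea is needed beyond the Resnick--Stărică machinery already invoked in the proof of Result \ref{Res:Variances}.
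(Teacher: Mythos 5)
Your proof is correct and follows essentially the same route as the paper: both arguments reduce the claims to the stationarity-based diagonal/off-diagonal covariance decomposition, use that each single-site second moment is $O(k/n)$ while the off-diagonal covariance series is summable uniformly in $n$ thanks to (A.3)--(A.4) and the joint-tail asymptotics \eqref{u_nsim_b2}, and then conclude from $l_n/r_n\to 0$ (via (A.7)) and $0<n-r_nm_n<r_n$. The only difference is organizational --- you package everything into one uniform bound $\Var\bigl(\sum_{j=1}^{m}\tilde{\phi}_i(Y_j)\bigr)\le C_i\,m\,k/n$ for $m\le r_n$, whereas the paper cites Lemma 4.3 of Rootz\'en et al.\ for the pure variance terms and treats the cross-covariance terms by the explicit Resnick--St\^aric\^a computation --- and you correctly flag the uniformity-in-$n$ issue that both arguments ultimately settle by the same machinery.
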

\begin{proof}
To prove (\ref{2.2.15.1}), split up the variance into parts of variances and covariances:
\begin{eqnarray*}
&& \frac{n}{r_n \tilde{\tau}^2_{n}} \Var\left(\sum_{j = 1}^{l_n} \tilde{\phi} (Y_j) \right) \\[0.1cm] &=&
\frac{n}{r_n \tilde{\tau}^2_{n}} \left[ \alpha_1^2 \Var\left(\sum_{j = 1}^{l_n} \phi_1(Y_j) \right)  + \alpha_2^2 \Var\left(\sum_{j = 1}^{l_n} \phi_2(Y_j) \right) + A^2 \Var\left(\sum_{j = 1}^{l_n}  \mathbf{1}\{Y_j > 0 \} \right) \right. \\[0.1cm] && -2\alpha_1 \alpha_2 \Cov\left(\sum_{j = 1}^{l_n} \phi_1(Y_j) , \sum_{j = 1}^{l_n} \phi_2(Y_j) \right)  + 2 A \alpha_1 \Cov\left(\sum_{j = 1}^{l_n} \phi_1(Y_j), \sum_{j = 1}^{l_n}  \mathbf{1}\{Y_j > 0 \} \right)
\\[0.1cm] && \left. \hspace{-1mm}- 2A \alpha_2 \Cov\left(\sum_{j = 1}^{l_n} \phi_2(Y_j), \sum_{j = 1}^{l_n}  \mathbf{1}\{Y_j > 0 \} \right) \right].
\end{eqnarray*}
Now, recall that all conditions of Lemma 4.3 in \cite{Rootzen2} are satisfied for each of the three functions $(\phi_1(x) - \E(\phi_1(Y_1)))/\tau_n^{(1)}$, $(\phi_2(x) - \E(\phi_2(Y_1)))/\tau_n^{(2)}$ and $(\mathbf{1}\{x > 0\} - \Prob(Y_1 > 0))/\tau_n^{(I)} $ and since by definition $\tilde{\tau}_n$ is a linear combination of $\tau_n^{(1)}, \tau_n^{(2)}, \tau_n^{(I)}, \tau_n^{(1,2)}, \tau_n^{(I,1)}$ and $\tau_n^{(I,2)}$ and all elements are asymptotically of the same order with different constants by Result \ref{Res:Variances}, all variance terms above clearly converge to zero as $n,k,n/k \to \infty$. Then, for the first covariance term:
\vspace{-0.2cm}
\begin{eqnarray*}
&& \frac{n}{r_n \tilde{\tau}^2_{n}}  \Cov\left(\sum_{j = 1}^{l_n} \phi_1(Y_j) , \sum_{j = 1}^{l_n} \phi_2(Y_j) \right) \\ &=&  \frac{n}{r_n \tilde{\tau}^2_{n}}  \sum_{j = 1}^{l_n} \Cov\left(\phi_1(Y_j), \phi_2(Y_j) \right)   + \frac{n}{r_n\tilde{\tau}^2_{n}}   \sum_{i < j}^{l_n} \Cov\left(\phi_1(Y_i), \phi_2(Y_j) \right)  \\ &&  + \frac{n}{r_n\tilde{\tau}^2_{n}}   \sum_{i > j}^{l_n} \Cov\left(\phi_1(Y_i), \phi_2(Y_j) \right)    :=  \text{I}_1 + \text{I}_2 + \text{I}_3.
\end{eqnarray*}
Going along the steps of \cite{Resnick4}, p. 708-711, it is easy to see that
\begin{eqnarray*}
\text{I}_1 &=&  \frac{l_n}{r_n} \frac{k}{\tilde{\tau}^2_{n}} \frac{n}{k} \Cov\left( \phi_1(Y_1), \phi_2(Y_1)  \right) \\[0.1cm] &=& o(1)O(1)O(1) = o(1)
\end{eqnarray*}
as $n,k,n/k \to \infty$ because $l_n/r_n = o(1)$, $k/\tilde{\tau}^2_{n} = O(1)$ and $ \frac{n}{k}\Cov\left( \phi_1(Y_1), \phi_2(Y_1)  \right) \sim  \frac{n}{k}\E\left( \phi_1(Y_1)\cdot\phi_2(Y_1)  \right) = \int_0^{\infty} \frac{n}{k} \Prob(\phi_1(Y_1)\phi_2(Y_2) > x) dx =  O(1)$.  Also, by the stationarity of  $\{Y_j\}_{j \geq 1}$:
\vspace{-0.2cm}
\begin{eqnarray*}
\text{I}_2 &=& \frac{n}{r_n \tilde{\tau}^2_{n}} \sum_{j=1}^{l_n - 1 } (l_n - j) \Cov \left(\phi_1(Y_1), \phi_2(Y_{j+1}) \right) \\[0.1cm]
&=&  \frac{n\, l_n }{r_n \tilde{\tau}^2_{n}} \sum_{j=1}^{l_n - 1 }  \E \left(\phi_1(Y_1), \phi_2(Y_{j+1}) \right) + o(1)  \\[0.1cm]
&=& \frac{k}{\tilde{\tau}^2_{n}} \, \frac{l_n}{r_n}\frac{n}{k} \sum_{j=1}^{l_n - 1 } \E \left(\phi_1(Y_1), \phi_2(Y_{j+1}) \right)+ o(1) \\[0.25cm]
&=& O(1)o(1)O(1) + o(1) = o(1),
\end{eqnarray*}
$n,k,n/k \to \infty$. Similarly, one shows $\text{I}_3 \to 0$ as $n,k,n/k \to \infty$ so that the whole term asymptotically  vanishes. Analogous considerations for the other covariance terms conclude the proof of (\ref{2.2.15.1}). To show (\ref{2.2.15.2}), notice that $0 < n - r_n m_n < r_n$, which simply yields $n - r_n m_n = o(n).$ Thus, the proof is straightforward by simply repeating the previous steps. \qed
\end{proof}
Now we can state our main theorem:
\begin{theo} \label{Theo:Asnorm} Let the Assumptions 1-9 hold and $r < 0$. Further suppose there exists a sequence $w_n$ such that (\ref{w_n4}) and (\ref{w_n5}) hold. Then
\begin{equation*}
\sqrt{k} \left( \begin{array}{c} \displaystyle \hat{\gamma}_{LME} - \gamma \\[0.25cm] \displaystyle \frac{\hat{\sigma}_{LME}}{\sigma(n/k)} - 1 \end{array} \right) \overset{D}{\to} N\left( \boldsymbol{0},L \Sigma L^T\right)
\end{equation*}
as $n,k, n/k \to \infty$, where
\begin{equation*} L :=
\left(
\begin{array}{cc} \displaystyle
-\frac{(1-r)(1+\gamma)}{\gamma r} & \displaystyle   \frac{(1-r)^2(1+\gamma -r)}{r^2}  \\[0.5cm]
\displaystyle \frac{(1+\gamma)}{\gamma r}
& \displaystyle - \frac{(1-r)^2(1+\gamma-r)}{r^2}
\end{array}
\right),
\end{equation*}
and $\Sigma$ as in (\ref{variance}).
\end{theo}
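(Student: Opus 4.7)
The plan is to apply a stochastic Taylor expansion (delta method) to the LME estimating equations (\ref{LME1})--(\ref{LME2}), identifying $L = -J^{-1}$, where $J$ is the probability-limit Jacobian of those equations at the true parameter $(\gamma, \sigma(n/k))$. Reparametrize by $a := \gamma' - \gamma$ and $b := \sigma'/\sigma(n/k) - 1$, and define
\begin{align*}
\tilde{M}_1(a,b) &:= \frac{1}{k}\sum_{j=0}^{k-1}\log\left(1 + \frac{(\gamma+a)\, Y''_j}{\sigma(n/k)(1+b)}\right) - (\gamma+a), \\
\tilde{M}_2(a,b) &:= \frac{1}{k}\sum_{j=0}^{k-1}\left(1 + \frac{(\gamma+a)\, Y''_j}{\sigma(n/k)(1+b)}\right)^{r/(\gamma+a)} - \frac{1}{1-r}.
\end{align*}
Setting $\hat a := \hat\gamma_{LME} - \gamma$ and $\hat b := \hat\sigma_{LME}/\sigma(n/k) - 1$, the LME system reads $\tilde{M}_1(\hat a,\hat b) = \tilde{M}_2(\hat a,\hat b) = 0$, while by construction $\sqrt{k}(\tilde{M}_1(0,0), \tilde{M}_2(0,0))^T = \boldsymbol{\varsigma} \overset{D}{\to} N(\boldsymbol{0}, \Sigma)$ by Theorem \ref{Theo:Bivariate}. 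The consistency result from \cite{Martig} supplies $(\hat a, \hat b) \overset{P}{\to} (0,0)$.

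The core of the argument is the Taylor expansion of $\tilde{M}_i$ about $(0,0)$, which yields
\[
-\sqrt{k}\begin{pmatrix}\tilde{M}_1(0,0)\\ \tilde{M}_2(0,0)\end{pmatrix} = J_n \sqrt{k}\begin{pmatrix}\hat a\\ \hat b\end{pmatrix} + \sqrt{k}\begin{pmatrix}R_1\\ R_2\end{pmatrix},
\]
with $J_n$ the random Jacobian at $(0,0)$. To identify its probability limit, observe that (\ref{A(t)_2}) combined with (\ref{twoprelim}) gives $1 + \gamma Y''_j/\sigma(n/k) = (|X|_{n,n-j}/|X|_{n,n-k})(1+o_P(1))$ uniformly for $j<k$, so that the entries of $J_n$ become continuous functionals of the empirical Pareto-type moments $k^{-1}\sum_{j=0}^{k-1}(|X|_{n,n-j}/|X|_{n,n-k})^s$ (and their log-weighted versions). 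Arguing as in (\ref{(1-r)}) (cf.\ \cite{Martig}, Lemma 1, and \cite{Resnick3}, pp.\ 80--85) these converge in probability to $1/(1-s\gamma)$ and $\gamma/(1-s\gamma)^2$ respectively, for every $s\gamma<1$. Straightforward algebra then produces
\[
J_n \overset{P}{\to} J = \begin{pmatrix} -\dfrac{\gamma}{1+\gamma} & -\dfrac{\gamma}{1+\gamma} \\[0.3em] -\dfrac{r}{(1-r)^2(1-r+\gamma)} & -\dfrac{r}{(1-r)(1-r+\gamma)} \end{pmatrix},
\]
and a direct matrix inversion (using $\det J = -\gamma r^2/[(1+\gamma)(1-r)^2(1-r+\gamma)] \neq 0$) verifies that $-J^{-1}$ coincides with the matrix $L$ of the statement.

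The remaining step handles the remainders. All second-order partials of $\tilde{M}_i$ reduce to the same type of empirical Pareto moments, and since $r<0$ every relevant exponent stays strictly below $1/\gamma$, so these partials are $O_P(1)$ on a neighborhood of $(0,0)$; hence $R_i = O_P(\hat a^2 + \hat b^2)$. Inverting the linearized identity combined with $\boldsymbol{\varsigma} = O_P(1)$ first produces $\sqrt{k}(\hat a, \hat b)^T = O_P(1)$, and therefore $\sqrt{k} R_i = O_P(k^{-1/2}) = o_P(1)$. Slutsky's theorem then gives
\[
\sqrt{k}\begin{pmatrix}\hat\gamma_{LME}-\gamma\\ \hat\sigma_{LME}/\sigma(n/k)-1\end{pmatrix} = -J_n^{-1} \boldsymbol{\varsigma} + o_P(1) \overset{D}{\to} N(\boldsymbol{0}, L\Sigma L^T).
\]
The main obstacle is the bookkeeping: verifying uniform-in-$j$ convergence of the empirical Pareto moments (and their log-weighted analogues) so as to pin down $\lim J_n$ unambiguously, and checking that the quadratic remainder really is of order $\hat a^2 + \hat b^2$ rather than something worse when $\hat a, \hat b$ are only $O_P(k^{-1/2})$. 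Once that technical layer is in place, the result is an immediate consequence of Theorem \ref{Theo:Bivariate} and the matrix identity $L = -J^{-1}$.
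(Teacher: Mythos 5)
Your proposal is correct and follows essentially the same route as the paper: linearize the estimating equations around the true parameter, identify the limit Jacobian $J$ (your matrix agrees exactly with the paper's (\ref{M})), set $L=-J^{-1}$, and conclude via Theorem \ref{Theo:Bivariate}, consistency from \cite{Martig}, and Slutsky. The only cosmetic difference is that the paper invokes the mean value theorem with the Jacobian evaluated at an intermediate random point (so no quadratic remainder needs to be controlled), whereas you expand at $(0,0)$ and bound the remainder explicitly; both are standard and equivalent here.
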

\begin{proof} Recall that $Y''_i = |X|_{n,n-i}- |X|_{n,n-k}$ and denote $\mathbf{Y''} := \{ Y''_0,Y''_1,\ldots,Y''_{k-1} \}$. Next define for $x,y > 0$:
\begin{equation*}
  \mathbf{Z}(\mathbf{Y''},x,y):= \left( \begin{array}{c} \frac{1}{k} \sum_{i=0}^{k-1} \log \left( 1 + \frac{x Y''_i}{y \sigma(n/k)} \right) - x \\[0.1cm]  \frac{1}{k} \sum_{i=0}^{k-1}  \left( 1 + \frac{x Y''_i}{y\sigma(n/k)} \right)^{r/x} - 1/(1-r)
  \end{array} \right)
\end{equation*}
Notice that $ \mathbf{Z}(\mathbf{Y''},\hat{\gamma}_{LME},\hat{\sigma}_{LME}/\sigma(n/k)) = (0,0)^T$ due to (\ref{LME1}) and (\ref{LME2}) and thus, by the mean value theorem of differentiation, we obtain the following system of equations:
\begin{eqnarray}
&& \mathbf{Z}(\mathbf{Y''},\gamma,1) -  \mathbf{Z}(\mathbf{Y''},\hat{\gamma}_{LME},\hat{\sigma}_{LME}/\sigma(n/k)) \nonumber \\[0.25cm]&=& \mathbf{Z}(\mathbf{Y''},\gamma,1) = -\operatorname{D}\hspace{-0.1cm}\mathbf{Z}(\mathbf{Y''},\gamma_n,\sigma_n)\cdot (\Delta\gamma,\Delta\sigma)^T, \label{MVT3}
\end{eqnarray}
where $(\gamma_n,\sigma_n)^T = \alpha_n (\gamma,1)^T + (1-\alpha_n) (\hat{\gamma}_{LME},\hat{\sigma}_{LME}/\sigma(n/k))^T$ for all $n$ with some $\alpha_n \in (0,1)$, $\Delta\gamma := \hat{\gamma}_{LME} - \gamma$, $\Delta \sigma := \hat{\sigma}_{LME}/ \sigma(n/k) -1$ and
\begin{eqnarray*}
&&\operatorname{D}\hspace{-0.1cm}\mathbf{Z}(\mathbf{Y''},x,y) := \nonumber \\[0.25cm]
&& \left(
\begin{array}{cc}
\frac{1}{k} \sum_{i=0}^{k-1} \frac{Y''_i}{y \sigma(n/k)} \left( 1 + \frac{x Y''_i}{y \sigma(n/k)} \right)^{-1} - 1 & -\frac{1}{k} \sum_{i=0}^{k-1}  \frac{x Y''_i}{y^2 \sigma(n/k)}  \left( 1 + \frac{x Y''_i}{y \sigma(n/k)} \right)^{-1} \\[0.5cm]
\begin{array}{l} \frac{1}{k} \sum_{i=0}^{k-1} \left[ \frac{r Y''_i}{x y \sigma(n/k)} \left( 1 + \frac{x Y''_i}{y \sigma(n/k)} \right)^{r/x-1} \right. \\ \left. -\frac{r}{x^2} \left( 1 + \frac{x Y''_i}{y \sigma(n/k)} \right)^{r/x} \log\left(  1 + \frac{x Y''_i}{y \sigma(n/k)}  \right) \right]
\end{array}
& -\frac{1}{k} \sum_{i=0}^{k-1}  \frac{r Y''_i}{y^2 \sigma(n/k)} \left( 1 + \frac{x Y''_i}{y \sigma(n/k)} \right)^{r/x - 1}
\end{array}
\right). \nonumber \\[0.1cm]  &&
\end{eqnarray*}
\vspace{-0.75cm}
\newline
Using Theorem 1 in \cite{Martig} and repeating the very same steps as in the proof of their Lemma 3, several applications of the continuous mapping theorem yield
\begin{equation}
 \operatorname{D} \hspace{-0.1cm}\mathbf{Z}(\mathbf{Y''},\gamma_n,\sigma_n) \overset{P}{\to} \left(
\begin{array}{cc} \displaystyle
-\frac{\gamma}{(1 + \gamma)} & \displaystyle  -\frac{\gamma}{(1 + \gamma)}  \\[0.5cm]
\displaystyle  \frac{-r}{(1-r)^2 (1 + \gamma -r)}
& \displaystyle  \frac{-r}{(1-r)(1+\gamma -r)}
\end{array}
\right)  \label{M} \end{equation}
as $n,k, n/k \to \infty$.
\newline Notice that the matrix in (\ref{M}) is non-singular (remember that $r < 0$) and hence, by \cite{Lehmann}, Lemma 5.3.3, it follows that
\begin{equation}
-\operatorname{D}\hspace{-0.1cm}\mathbf{Z}(\mathbf{Y''},\gamma_n,\sigma_n)^{-1} \overset{P}{\to} \left(
\begin{array}{cc} \displaystyle
-\frac{(1-r)(1+\gamma)}{\gamma r} & \displaystyle   \frac{(1-r)^2(1+\gamma -r)}{r^2}  \\[0.5cm]
\displaystyle \frac{(1+\gamma)}{\gamma r}
& \displaystyle - \frac{(1-r)^2(1+\gamma-r)}{r^2}
\end{array}
\right) \label{M_inv}
\end{equation}
as $n,k, n/k \to \infty$.
\newline Now, a simple reformulation of (\ref{MVT3}) yields
\begin{equation*}
 \sqrt{k} (\Delta\gamma,\Delta\sigma)^T = \sqrt{k} \mathbf{Z}(\mathbf{Y''},\gamma,1) \left(- \operatorname{D}\hspace{-0.1cm}\mathbf{Z}(\mathbf{Y''},\gamma_n,\sigma_n)^{-1} \right),
\end{equation*}
and thus by (\ref{M_inv}) and Theorem \ref{Theo:Bivariate}:
\begin{equation*}
\sqrt{k} \mathbf{Z}(\mathbf{Y''},\gamma,1) \overset{D}{\to} N( \boldsymbol{0} ,\Sigma),
\end{equation*}
 as $n,k, n/k \to \infty$. The desired result follows by Slutsky's Theorem. \qed
\end{proof}
In the last part of this section, we will  simplify and generalize the conditions listed in our main theorem as far as possible. First of all notice that if our Assumptions 1-3 hold, then Assumption 8 can be reformulated as follows:
\vspace{0.25cm}
\newline Assume $\F_{X}$ has a derivative $\F'_{X}$ on $\mathbb{R} \diagdown [-t_0, t_0]$ where $t_0 > 0$. Then (A.8) holds if
\begin{equation}
\lim_{t \to \infty}\frac{t \F'_{X}(t)}{1-\F_{X}(t)} = \frac{1}{\gamma}. \label{vMises2}
\end{equation}
Next consider the following assumptions for the cdf of the innovations:
\vspace{0.25cm}
\newline Assume there exists $d < 1$ such that
\vspace{-0.1cm}
\begin{equation}
\E ( Z_1^d) < \infty \label{d-Moment}
\end{equation}
\vspace{-0.3cm}
\newline
and that $\G_Z$ has a derivative $\G'_Z$ which is $L_1$-Lipschitz, that is for any $y > 0$ and some positive constant $C$:
\begin{equation}
\int_0^{\infty} |\G'_Z(x) - \G'_Z(x+y)| \, dx < Cy. \label{Lipschitz}
\end{equation}
Finally assume that
\begin{equation}
\liminf_{n,k,n/k \to \infty} n/k^{3/2} > 0 \hspace{0.5cm} \text{or} \hspace{0.5cm}  \limsup_{n,k,n/k \to \infty} n/k^{3/2} < \infty. \label{limsup}
\end{equation}
From \cite{Resnick4}, pp. 712-713, we derive that if our Assumptions 1 and 2 as well as (\ref{vMises2})-(\ref{limsup}) hold, then our Assumptions 5-7 as well as (\ref{w_n4}) and (\ref{w_n5}) are always satisfied. Let's summarize this in the following corollary:
\begin{cor} \label{Cor:Asnorm}
Let the Assumptions 1-4 and 9 as well as (\ref{vMises2})-(\ref{limsup}) hold. Then for $r < 0$:
\begin{equation*}
\sqrt{k} \left( \begin{array}{c} \displaystyle \hat{\gamma}_{LME} - \gamma \\[0.25cm] \displaystyle \frac{\hat{\sigma}_{LME}}{\sigma(n/k)} - 1 \end{array} \right) \overset{D}{\to} N\left( \boldsymbol{0},L \Sigma L^T\right)
\end{equation*}
as $n,k, n/k \to \infty$.
\end{cor}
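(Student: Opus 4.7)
The corollary is essentially a corollary (in the strict sense) of Theorem \ref{Theo:Asnorm}, so the plan is to verify that the simplified hypotheses imply all the assumptions of that theorem. Concretely, what needs to be established is that Assumptions 5--8 and the auxiliary sequence conditions (\ref{w_n4})--(\ref{w_n5}) all follow from Assumptions 1--4, Assumption 9, and the listed conditions (\ref{vMises2})--(\ref{limsup}); once this reduction is complete, applying Theorem \ref{Theo:Asnorm} yields the conclusion, and multiplying by $\sqrt{k}$ together with $-\operatorname{D}\mathbf{Z}^{-1}$ as in its proof produces the stated limiting matrix $L \Sigma L^T$.

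First I would handle Assumption 8, which is a statement about $\F_{|X|}$ rather than $\F_X$. Under Assumptions 1--3, $F_{|X|}(t) = F_X(t) - F_X(-t)$ for $t > t_0$, so $F'_{|X|}(t) = F'_X(t) + F'_X(-t)$ exists on $(t_0,\infty)$. Using the regular variation of $1 - F_X$ and $F_X(-\cdot)$ implied by (A.2), together with the von Mises condition (\ref{vMises2}), a routine computation splits $t\F'_{|X|}(t)/(1-\F_{|X|}(t))$ into two terms weighted by the asymptotic mass ratios $\pi_1, \pi_2$; each converges to $1/\gamma$, and their convex combination gives Assumption 8.

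The remaining verification, that (A.5)--(A.7) together with (\ref{w_n4})--(\ref{w_n5}) hold, is carried out exactly in the manner of Resnick \& St\^aric\^a \cite{Resnick4}, pp.~712--713. Condition (A.5) is simply a definitional choice of the threshold sequence $u_n$; the strong mixing rate in (A.6) for the linear process $\{X_n\}$ is obtained from the $d$-moment assumption (\ref{d-Moment}) together with the $L_1$-Lipschitz property (\ref{Lipschitz}) of the innovation density, which together provide a usable bound on $\alpha_{n,l_n}$. The existence of sequences $l_n, r_n, w_n$ that simultaneously satisfy (A.7), (\ref{w_n4}) and (\ref{w_n5}) then amounts to an explicit construction exploiting the growth relationship (\ref{limsup}) between $n$ and $k$; the dichotomy $\liminf n/k^{3/2} > 0$ versus $\limsup n/k^{3/2} < \infty$ is precisely what guarantees that the competing upper and lower bounds on $r_n$ and $w_n$ can be met simultaneously.

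The main technical obstacle is this last step, the explicit balancing of the auxiliary sequences: the mixing constraint forces $r_n$ to grow fast enough that $m_n \alpha_{n,l_n} \to 0$, while (\ref{w_n5}) forces $r_n w_n$ to be small compared to $\sqrt{k}$, and (\ref{w_n4}) forces $w_n$ to grow at a rate compatible with the tail index $1/\gamma$. However, since this construction is already done in \cite{Resnick4} under hypotheses identical to ours, no new work is required and the proof of the corollary reduces to a single invocation of Theorem \ref{Theo:Asnorm} once the observations in the two preceding paragraphs are recorded.
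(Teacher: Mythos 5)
Your proposal matches the paper's own argument: the paper likewise reduces the corollary to Theorem \ref{Theo:Asnorm} by (i) noting that under Assumptions 1--3 the von Mises condition (\ref{vMises2}) on $\F_X$ yields Assumption 8 for $\F_{|X|}$, and (ii) citing Resnick \& St\^aric\^a, pp.~712--713, to get Assumptions 5--7 together with (\ref{w_n4}) and (\ref{w_n5}) from (\ref{d-Moment}), (\ref{Lipschitz}) and (\ref{limsup}). Your write-up is, if anything, slightly more explicit than the paper about the $F_{|X|}(t)=F_X(t)-F_X(-t)$ decomposition and the sequence balancing, but the route is the same.
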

\section{Checking the conditions of the main theorem}
\setcounter{equation}{0}
Even tough Corollary \ref{Cor:Asnorm} helps for a better understanding under what circumstances bivariate asymptotic normality holds, it still has some assumptions which are rather uneasy to check. Both the von Mises-Condition (\ref{vMises2}) and the second-order (extended) regular variation conditions give constraints on the tail distribution function (tdf) of the linear process $1 - \F_{|X|}$. Since this tdf depends on $1-\G_Z$ and $\{c_j\}_{j \geq 0}$ (see (\ref{Result 2.0.1})), it is of practical and statistical advantage to reformulate these unpleasant assumptions using only $1-\G_Z$ (possibly adding some other mild conditions). In other words, we want to formulate Corollary \ref{Cor:Asnorm} under assumptions on the tdf of the innovations.
\newline A very powerful tool to remedy this problem is to do asymptotic expansions as proposed in \cite{Barbe}. In their Proposition 4.2.2 (which is an application of their main Theorem 2.5.1) they show that if $1 - \G_Z$  satisfies some smoothness conditions and is second-order regularly varying, then $1-\F_{|X|}$ obeys the von Mises-Condition and is second-order regularly varying itself! This very nice result is indeed not far away from our target (we also need second-order extended regular variation beside the ``normal'' second-order regular variation) but in fact, as we will see in Example \ref{Ex:Pareto}, it is possible to expand the results of \cite{Barbe} in such a manner that all conditions of Corollary \ref{Cor:Asnorm} are satisfied. Nevertheless it would be of great interest to know whether it is possible to expand the above listed results even to second-order extended regular variation to state a complete and practically implementable theorem for bivariate asymptotic normality of the likelihood moment estimators.
\vspace{0.5cm}
\newline In the following example we  show that the assumptions of Corollary \ref{Cor:Asnorm} hold if the cdf of the innovations is Pareto. Let for convenience $\alpha := 1/\gamma$:
\begin{ex} \rm \label{Ex:Pareto} Assume that $\G_Z(x) = 1-x^{-\alpha}, \, x \geq 1$ and $\alpha > 2$. Let $X_n = \sum_{j=0}^{\infty} c_j$ $ Z_{n-j}$ be a weakly stationary linear process consisting of a sequence of \emph{non-negative} coefficients $\{c_j\}_{j \geq 0}$, at least one $c_j >0$, satisfying our Assumptions 3 and 4. This will also ensure that (\ref{d-Moment}) and (\ref{Lipschitz}) trivially hold and that $|X_n| = X_n $. Finally recall that $\E(Z_1) = \alpha/(\alpha - 1) := \mu > 0$ and that $\Var(Z_1) =  \alpha/((\alpha-1)(\alpha-2)) := \sigma^2$.
\vspace{0.25cm}
\newline Next we are going to apply Theorem 2.5.1 of \cite{Barbe} twice in order to obtain asymptotic expansions for $1 - \F_{X}$ and $(1 - \F_{X})'$:
\newline Using their notations, take $k = 0$, $m = 2$ and $\omega = 3$ in the first case and $k = 1$, $m = 0$ and $\omega =2$ in the second case and notice that $\G_Z$ is smoothly varying of index $-\alpha$ and order $\omega = 3$. Further denote $C_u := \sum_{j = 0}^{\infty} c_j^u$ and assume there exists $0 < \xi < 1$ such that for $\eta :=  \xi (\alpha/(\alpha + 3) \wedge 1/2)$:
\begin{equation*}
C_{\eta} < \infty. \tag{i}  \\[0.25cm ]
\end{equation*}
Also, let
\begin{equation*}
(C_2 C_{\alpha} - C_{\alpha + 2})\sigma^2 + (C_1^2 C_{\alpha} - 2 C_1 C_{\alpha + 1} + C_{\alpha +2}) \mu^2 \neq 0. \tag{ii}
\end{equation*}
Then for $t\to\infty$
\begin{eqnarray}
1 - \F_{X}(t) &=& C_{\alpha} t^{-\alpha} + \alpha \mu (C_1 C_{\alpha} - C_{\alpha + 1}) t^{-\alpha - 1} +  \frac{1}{2} \alpha (\alpha + 1)\cdot \nonumber \\[0.1cm]  && \hspace{+0.1cm} \left[ (C_2 C_{\alpha} - C_{\alpha + 2})\sigma^2  +  (C_1^2 C_{\alpha} - 2 C_1 C_{\alpha + 1} + C_{\alpha +2}) \mu^2 \right]  t^{-\alpha - 2} \nonumber \\[0.15cm] && \hspace{+0.1cm} +   o(t^{-\alpha - 2})  \nonumber \\[0.25cm]
&:=& \tilde{c}_1 t^{-\alpha} + \tilde{c}_2 t^{-\alpha - 1} + \tilde{c}_3 t^{-\alpha - 2}  +  o(t^{-\alpha - 2}),  \label{TailApprox}
\end{eqnarray}
where $\tilde{c}_1, \tilde{c}_3 \neq 0$, and
\begin{equation}
(1 - \F_{X}(t))' = - \alpha C_{\alpha} t^{-\alpha - 1} + o(t^{-\alpha - 1}). \label{vonMisesApprox}
\end{equation}
 Now, from (\ref{TailApprox}) and (\ref{vonMisesApprox}), it simply follows that
\begin{equation*}
\lim_{t \to \infty} \frac{t \F'_X(t)}{1-\F_X(t)} = \lim_{t \to \infty} \frac{t  \alpha C_{\alpha} t^{-\alpha - 1} (1+o(1))}{C_{\alpha} t^{-\alpha}(1+o(1))} = \alpha,
\end{equation*}
so that the von Mises-Condition (\ref{vMises2}) obviously holds.
\vspace{0.25cm}
\newline In order to check whether the second-order extended regular and second-order regular variation assumptions hold, we need an asymptotic expansion for $ b(x) := (1/(1-\F_{X}))^{\leftarrow}(x)$. After some calculations, we deduce that
\begin{equation}
b(x) = a_1 x^{1/\alpha}  + a_2  +  a_3 x^{-1/\alpha} + o(x^{-1/\alpha}), \label{three-term}
\end{equation}
where
\begin{equation*}
a_1 = \tilde{c}_1^{1/\alpha}, \hspace{0.5cm} a_2 = \tilde{c}_2/(\alpha \tilde{c}_1), \hspace{0.5cm} a_3 =  - \tilde{c}_1^{-1/\alpha - 2} \left[ (1+\alpha)\tilde{c}_2^2  /(2 \alpha) - \tilde{c}_1 \tilde{c}_3 \right] / \alpha
\end{equation*}
and because we want $a_3$ to be non-zero we also assume that
\begin{equation*}
(1 + \alpha) \tilde{c}_2^2/(2\alpha) - \tilde{c}_1\tilde{c}_3 \neq 0. \tag{iii}
\end{equation*}
Using (\ref{three-term}), it is now easy to see that $b(t)$ is 2ERV$(-2/\alpha, 2 a_1^{-1} a_3 t^{-2/\alpha}/\alpha)$ and our $k$ has to be chosen such that for $n,k,n/k \to \infty$:
\begin{equation}
\sqrt{k} A(n/k) = \text{const.} \sqrt{k} \, (n/k)^{-2/\alpha} \to 0.
\label{convspeed1}
\end{equation}
To show that $1 - \F_{X}$ is also second-order regularly varying, we have to deal with two cases:
\newline If $\tilde{c}_2 = 0$, then, according to section 2, $1 - \F_{X}$ is 2RV$(-2, -2\tilde{c}_3t^{-2}/\tilde{c}_1)$. Hence the rate of convergence is the same as for second-order extended regular variation and the only constraints on our sequence $k$ are (\ref{limsup}) and (\ref{convspeed1}).
\newline In case $\tilde{c}_2 \neq 0$ the rates of convergence required by the two forms of second-order condition are not the same because in this particular case $1 - \F_{X}$ is 2RV$(-1, -\tilde{c}_2t^{-1}/\tilde{c}_1)$ and $k$ has to be chosen such that for $n,k,n/k \to \infty$:
\begin{equation}
\sqrt{k} A^*(b_{X}(n/k)) = \text{const.} \sqrt{k} \, b^{-1}_{X}(n/k)  \sim \text{const.} \sqrt{k} \, (n/k)^{-1/\alpha}  \to 0.
\label{convspeed3}
\end{equation}
Since every choice of $k$ which fulfils (\ref{convspeed3}) automatically satisfies (\ref{convspeed1}), the only constraints on our sequence $k$ in case that $\tilde{c}_2 \neq 0$ are those in (\ref{limsup}) and (\ref{convspeed3}).
\newline Thus, a suitable choice of $k$ is
\begin{equation*}
k = k(n) := \begin{cases} n^{2\theta/(2 + \alpha)} & \text{if} \: \tilde{c}_2 \neq 0 \\
n^{4\theta/(4 + \alpha)} &\text{if} \: \tilde{c}_2 = 0, \end{cases} \tag{iv}
\end{equation*}
where $\theta \in (0,1)$ is an arbitrary constant.
\vspace{0.25cm}
\newline Hence, assuming the mild conditions (i)-(iv) to hold, all assumptions of Corollary \ref{Cor:Asnorm} are satisfied for this very special setup. \hfill  	$\bigtriangleup$ \end{ex}
\section{Acknowledgments} The authors want to thank William P. McCormick for his contributions and very thorough explanations regarding the asymptotic tail expansions.

\newpage

\end{document}